\newtheorem{lem}{Lemma}
\newtheorem{thm}{Theorem}
\def\calM{\mathcal{M}}
\def\E{\mathbb{E}}
\def\P{\mathbb{P}}
\def\R{\mathbb{R}}
\def\bE{\mathbf{E}}
\def\ba{\mathbf{a}}
\def\bb{\mathbf{b}}
\def\bc{\mathbf{c}}
\def\bh{\mathbf{h}}
\def\bT{\mathbf{T}}
\def\bx{\mathbf{x}}
\def\bX{\mathbf{X}}
\def\bA{\mathbf{A}}
\def\bu{\mathbf{u}}
\def\bv{\mathbf{v}}
\def\bw{\mathbf{w}}
\def\vec{\mathrm{vec}}
\def\calN{\mathcal{N}}
\begin{document}
\title{Characterizing Spatiotemporal Transcriptome of Human Brain via Low Rank Tensor Decomposition$^\ast$}

\date{(\today)}

\author{Tianqi Liu$^\dag$, Ming Yuan$^\ddag$, and Hongyu Zhao$^\S$\\
$^{\dag,\S}$Yale University and $^\ddag$Morgridge Institute for Research\\
 and\\
$^\ddag$University of Wisconsin-Madison}

\footnotetext[1]{Address for Correspondence: Ming Yuan, Department of Statistics, University of Wisconsin-Madison, 1300 University Avenue, Madison, WI 53706.}
\footnotetext[3]{
The research of Ming Yuan was supported in part by NSF FRG Grant DMS-1265202, and NIH Grant 1-U54AI117924-01.}
\footnotetext[4]{The research of Hongyu Zhao was supported in part by NIH grants GM59507 and GM122078.}
\maketitle

\newpage

\begin{abstract}
Spatiotemporal gene expression data of the human brain offer insights on the spatial and temporal patterns of gene regulation during brain development. Most existing methods for analyzing these data consider spatial and temporal profiles separately with the implicit assumption that different brain regions develop in similar trajectories, and that the spatial patterns of gene expression remain similar at different time points. Although these analyses may help delineate gene regulation either spatially or temporally, they are not able to characterize heterogeneity in temporal dynamics across different brain regions, or the evolution of spatial patterns of gene regulation over time. In this article, we develop a statistical method based on low rank tensor decomposition to more effectively analyze spatiotemporal gene expression data. We generalize the classical principal component analysis (PCA) which is applicable only to data matrices, to tensor PCA that can simultaneously capture spatial and temporal effects. We also propose an efficient algorithm that combines tensor unfolding and power iteration to estimate the tensor principal components, and provide guarantees on their statistical performances. Numerical experiments are presented to further demonstrate the merits of the proposed method. An application of our method to a spatiotemporal brain expression data provides insights on gene regulation patterns in the brain.
\end{abstract}
\newpage

\section{Introduction}
\label{sec:intro}

Principal component analysis (PCA) is among the most commonly used statistical methods for exploratory analysis of multivariate data \citep[e.g.,][]{pcabook}. By seeking a low rank approximation to the data matrix, PCA allows us to reduce the dimensionality of the data, and oftentimes serves as a useful first step to capture the essential features in the data. In particular, PCA has been widely used in analyzing gene expression data collected for multiple time points or across different biological conditions. See, e.g., \cite{alter00pca, wall01pca, yeung01pca}. While PCA is appropriate to analyze data matrices, data sometimes come in the format of higher order tensors, or multilinear arrays. In particular, our work here is motivated by characterizing the spatiotemporal gene expression patterns of human brain based on gene expression profiles collected from multiple brain regions of both developing and adult post-mortem human brains.

Human brain is a sophisticated and complex organ that contains billions of cells with different morphologies, connectivity and functions \citep[e.g.,][]{kandel2000principles}. Different brain regions have specific compositions of cell types expressing unique combinations of genes at different developmental periods. Recent advances in sequencing and micro-dissection technology have provided us new and powerful tools to take a closer look at this complex system. Many studies have been conducted in recent years to collect spatiotemporal expression data to identify spatial and temporal signatures of gene regulation in the brain, and gain insights into various biological processes of interest such as brain development processes, central nervous system formation, and brain anatomical structure shaping, among others. See, e.g., \cite{wen1998large, kang2011spatio, parikshak2013integrative, miller2014transcriptional, pletikos2014temporal, landel2014temporal, hawrylycz2015canonical}. 

The spatiotemporal expression data can be naturally modeled by a third order multilinear array, or tensor, with one index for gene, one for region, and another one for time. Because the classical PCA can only be applied to data matrices, previous analyses of such data often consider the spatial and temporal patterns separately. To characterize temporal patterns of gene expression, data from different regions are first pooled and treated as replicates, before applying PCA. Similarly, when extracting spatial patterns of gene expression, data from different time points are combined so that PCA could be applied. Such analyses have yielded some useful insights on the gene regulation in spatiotemporal transcriptome. See, e.g., \cite{lein2007genome, kang2011spatio}. But the data pooling precludes us from understanding the heterogeneity in temporal dynamics across different regions of the brain, or the evolution of spatial gene regulation patterns over time. There is a clear demand to develop statistical methods that can more effectively utilize the tensor structure of spatiotemporal expression data.

To this end, we introduce in this article a higher order generalization, hereafter referred to as tensor PCA, of the classical PCA to better characterize spatial and temporal gene expression dynamics. As in the classical PCA, we seek the best low rank orthogonal approximation to the data tensor. The orthogonality among the rank-one components is automatically satisfied by the classical PCA but is essential for our purpose. It not only ensures that the components can be interpreted in the same fashion as the classical PCA, but also is necessary for the low rank approximation to be well-defined. Unlike in the case of matrices, low rank approximations to a higher order tensor without orthogonality is ill-posed and the best approximation may not even exist \citep[e.g.,][]{silvalim08}. However, even with orthogonality, low rank approximations to a higher order tensor is still in general NP hard to compute \citep[e.g.,][]{hillar13NP}. Heuristic or approximation algorithms are often adopted, and they often lead to suboptimal statistical performances \citep[e.g.,][]{montanari2014statistical}. It is an active area of research in recent years to achieve a balance between computational and statistical efficiency when dealing with higher order tensors. For our purposes, we propose an efficient algorithm that combines tensor unfolding and power iteration to compute the principal components under the tensor PCA framework. We show that our estimates not only are easy to compute but also attain the optimal rate of convergence under suitable conditions.

Numerical experiments further demonstrate the merits of our proposed method. In addition, we applied our method to the spatiotemporal expression data from \cite{kang2011spatio}, and found that the proposed tensor PCA approach can effectively reduce the dimensionality of the data while preserving inherent structure among the genes. In particular, through clustering analysis, we show that tensor PCA reveals interesting relationship between gene functions and the spatiotemporal dynamics of gene regulation. To fix ideas, we focus on spatiotemporal expression data in this paper. Our methodology, however, is also readily applicable to other settings where data are in the form of tensor.

The rest of the article is organized as follows. Section \ref{sec:meth} introduces the proposed tensor PCA methodology. Section \ref{sec:sim} reports the result from simulation studies. Section \ref{sec:real} presents an application of the proposed methodology to a spatiotemporal brain gene expression data set. Finally, we conclude with some remarks and discussions by Section \ref{sec:dis}. All proofs are relegated to Section \ref{sec:proof}.

\section{Methodology}
\label{sec:meth}

Denote by $x_{gst}$ an appropriately normalized and transformed expression measurement for gene $g$, in region $s$, at time $t$, where $g=1,\ldots, d_G$, $s=1,\ldots, d_S$, and $t=1,\ldots, d_T$, and $d_G$, $d_S$ and $d_T$ are the number of genes, regions, and time points, respectively. In many applications, we may also have replicate measurements so that $x_{gst}$ is a vector rather than a scalar. To fix ideas, we shall focus on the case where there is no replicate. Treatment of the more general situation is analogous albeit more cumbersome in notation.

\subsection{From classical PCA to tensor PCA}
As mentioned above, the classical PCA is often applied to estimate spatial and temporal patterns of gene regulation separately. Consider, for  example, inferring the spatial patterns of gene regulation. Let
$$
\bar{x}_{gs\cdot}={1\over d_T}\sum_{t=1}^{d_T} x_{gst},
$$
be the averaged expression measurements for gene $g$ in region $s$. The classical PCA then extracts the leading principal components, or equivalently the leading eigenvectors of $d_G\times d_S$ matrix $\bx_g:=(\bar{x}_{g1\cdot},\ldots, \bar{x}_{gd_S\cdot})^\top$. The principal components can also be interpreted through singular value decomposition of data matrix $(\bx_1,\ldots,\bx_{d_G})^\top$. Denote by $\bv_k:=(v_{k1},\ldots,v_{kd_S})^\top$ the $k$th leading principal component and $\bu_k:=(u_{k1},\ldots,u_{kd_G})^\top$ its normalized loadings, that is its $\ell_2$ norm $\|\bu\|=1$. Then, after appropriate centering, the observed expression measurements can be written as
\begin{equation}
\label{eq:classic}
\bar{x}_{gs\cdot}=\sqrt{d_G}\sum_{k=1}^r\lambda_k u_{kg} v_{ks}+\bar{\epsilon}_{gs},
\end{equation}
where $\lambda_1\ge \lambda_2\ge\cdots \lambda_r>0$ so that $\sqrt{d_G}\lambda_k$ is the $k$th largest singular value of the data matrix $(\bar{x}_{gs\cdot})_{1\le g\le d_G, 1\le s\le d_S}$, and the idiosyncratic noise $\bar{\epsilon}_{gs}$ are iid centered normal random variables. Note that, in (\ref{eq:classic}), the scaling factor $\sqrt{d_G}$ is in place to ensure that $\lambda_k^2$ (more precisely $\lambda_k^2+{\rm var}(\bar{\epsilon}_{gs})$) can also be understood as the $k$th largest eigenvalue of the covariance matrix of $(\bar{x}_{gs\cdot})_{1\le s\le d_S}$ when they are viewed as independent random vectors for $g=1,\ldots, d_G$.

Obviously, because of pooling measurements from different time points, the principal components extracted this way can only be identified with spatial patterns {\it averaged} over all time points. Therefore it is not able to capture spatial patterns that evolve over time. Similar problem also arises when we pool data from different regions and extract principal components for temporal patterns. In order to model the spatial and temporal dynamics jointly, we now consider a generalization of PCA to specifically account for the tensor structure of the expression data.

The expression data $\bX=(x_{gst})_{1\le g\le d_G,1\le s\le d_S,1\le t\le d_T}$ can be conveniently viewed as a third order tensor of dimension $d_G\times d_S\times d_T$. It is clear that the pooled data matrix
$$
(\bx_1,\ldots,\bx_{d_G})^\top=\bX\times_3 \left({1\over d_T}{\bf 1}_{d_T}\right),
$$
where ${\bf 1}_d$ is a $d$ dimensional vector of ones, and $\times_j$ between a tensor and vector stands for multiplication along its $j$th index, that is,
$$
(\bA\times_3 \bx)_{ij}=\sum_k A_{ijk}x_k.
$$
See, e.g., \cite{KoldarBader} for further discussions on tensor algebra. Instead of seeking a low rank approximation to the pooled data matrix, we shall work directly with the data tensor $\bX$. More specifically, with slight abuse of notation, we shall consider the following low rank approximation to $\bX$:
\begin{equation}
\label{eq:pca}
\bX=\sqrt{d_G}\sum_{k=1}^r\lambda_k \left(\bu_k\otimes \bv_k\otimes \bw_k\right)+\bE,
\end{equation}
where the eigenvalues $\lambda_1\ge\cdots\ge \lambda_r>0$, $\bu_k$s, $\bv_k$s and $\bw_k$s are orthonormal basis in $\R^{d_G}$, $\R^{d_S}$ and $\R^{d_T}$ respectively, and the $\bE=(e_{gst})$ is the residual tensor consisting of independent idiosyncratic noise following a normal distribution $N(0,\sigma^2)$. Here $\otimes$ stands for the outer product so that
$$
x_{gst}=\sqrt{d_G}\sum_{k=1}^r\lambda_k u_{kg} v_{ks}w_{kt}+e_{gst},\qquad \forall 1\le g\le d_G, 1\le s\le d_S, 1\le t\le d_T.
$$

Conceptually, model (\ref{eq:pca}) can be viewed as a natural multiway generalization of the model for the classical PCA. Similar to the classical PCA, such a tensor decomposition allows us to conveniently capture the spatial dynamics and temporal dynamics by $\bv_k$s and $\bw_k$s, respectively. The loading of each gene for a particular interaction of spatial and temporal dynamics is then represented by $\bu_k$s.

\subsection{Estimation for tensor PCA}
Clearly, any interpretation of the data based on the tensor PCA model (\ref{eq:pca}) depends on our ability to estimate the principal components $\bv_k$s and $\bw_k$s from the expression data $\bX$. Naturally, we can consider estimating them via maximum likelihood, leading to the problem of computing the best rank $r$ approximation to data tensor $\bX$. In the case of the usual PCA, such a task can be accomplished by applying SVD to the data matrix. But for the tensor PCA model, this is a more delicate issue because low rank approximation to a generic tensor could be hard to compute at least in the worst case. To address this challenge, we introduce here an approach that combines tensor unfolding and power iteration and show that we can estimate the tenor principal components in an efficient way, both computationally and statistically.

\subsubsection{Tensor unfolding}
A commonly used heuristic to overcome this problem is through tensor unfolding. In particular, in our case, we may collapse the second and third indices of $\bX$ to unfold into a $d_G\times (d_S\cdot d_T)$ matrix $\calM(\bX)$ by collapsing the second and third indices, that is,
$$
[\calM(\bX)]_{i,(j-1)d_T+k}=X_{ijk},\qquad \forall 1\le i\le d_G, 1\le j\le d_S, 1\le k\le d_T.
$$
It is clear that
$$
\calM(\bX)=\sqrt{d_G}\sum_{k=1}^r \lambda_k \bu_k\otimes \vec(\bv_k\otimes \bw_k)+\calM(\bE),
$$
where $\vec(\cdot)$ vectorizes a matrix into a vector of appropriate dimension. This suggests that $\{\vec(\bv_k\otimes \bw_k): 1\le k\le r\}$ are the top right singular vectors of $\E[\calM(\bX)]$ and can therefore be estimated by applying singular value decomposition to $\calM(\bX)$. Denote by $\sqrt{d_G}\widehat{\lambda}_k$ the $k$th leading singular value of $\calM(\bX)$, and $\widehat{\bh}_k$ its corresponding right singular vector. We can reshape $\widehat{\bh}_k$ into a $d_S\times d_T$ matrix $\vec^{-1}(\widehat{\bh}_k)$, that is
$$
[\vec^{-1}(\widehat{\bh}_k)]_{ij}=(\widehat{\bh}_k)_{(i-1)d_T+j},\qquad \forall 1\le i\le d_S, 1\le j\le d_T.
$$
An estimate of $\bv_k$ and $\bw_k$ can then be obtained by the leading left and right singular vectors, denoted by $\widehat{\bv}_k$ and $\widehat{\bw}_k$ respectively, of $\vec^{-1}(\widehat{\bh}_k)$. It turns out that this simple approach can yield a consistent estimate of $\lambda_k$s, $\bv_k$s and $\bw_k$s. More specifically, we have

\begin{thm}
\label{th:matricize}
There exists an absolute constant $C>0$ such that for any simple eigenvalue $\lambda_k$ ($1\le k\le r$) under the tensor PCA model (\ref{eq:pca}), if the eigen-gap
$$g_k:=\min\{\lambda_{k-1}^2-\lambda_k^2,\lambda_k^2-\lambda_{k+1}^2\}\ge C(\sigma^2+\sigma\lambda_1)(d_Sd_T/d_G)^{1/2},$$
with the convention that $\lambda_0=\infty$ and $\lambda_{r+1}=0$, then
$$
\max\left\{\widehat{\lambda}_k^2-\lambda_k^2, 1-|\langle\widehat{\bv}_k,\bv_k\rangle|, 1-|\langle\widehat{\bw}_k,\bw_k\rangle|\right\}\le C(\sigma^2+\sigma\lambda_1) g_k^{-1}(d_Sd_T/d_G)^{1/2},
$$
with probability tending to one as $d_G\to\infty$.
\end{thm}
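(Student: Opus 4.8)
\emph{Proof plan.} The plan is to reduce the statement to a matrix perturbation problem for the unfolded array $\calM(\bX)$. Put $\bh_k:=\vec(\bv_k\otimes\bw_k)$; since $\langle\bh_j,\bh_k\rangle=\langle\bv_j,\bv_k\rangle\langle\bw_j,\bw_k\rangle=\delta_{jk}$, the $\bh_k$ form an orthonormal system in $\R^{d_Sd_T}$, so with $E:=\calM(\bE)$ (whose entries are i.i.d.\ $N(0,\sigma^2)$) we may write
\[
\calM(\bX)=M_0+E,\qquad M_0:=\E[\calM(\bX)]=\sqrt{d_G}\sum_{k=1}^r\lambda_k\,\bu_k\otimes\bh_k,
\]
and this already exhibits the singular value decomposition of $M_0$: its singular values are $\sqrt{d_G}\lambda_1\ge\cdots\ge\sqrt{d_G}\lambda_r$ with left singular vectors $\bu_k$ and right singular vectors $\bh_k$. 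Consequently $\widehat\lambda_k^2$ and $\widehat\bh_k$ are the $k$th eigenvalue and eigenvector of $G:=d_G^{-1}\calM(\bX)^\top\calM(\bX)$, and I would compare $G$ with the reference matrix $G_0:=\sum_{k=1}^r\lambda_k^2\,\bh_k\bh_k^\top+\sigma^2 I_{d_Sd_T}$. The eigenvalues of $G_0$ are $\lambda_1^2+\sigma^2\ge\cdots\ge\lambda_r^2+\sigma^2$, with eigenvectors $\bh_1,\dots,\bh_r$, together with $\sigma^2$ of multiplicity $d_Sd_T-r$; with the conventions $\lambda_0=\infty$ and $\lambda_{r+1}=0$ the spectral gap isolating the $k$th eigenvalue of $G_0$ is precisely $g_k$.

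The heart of the proof is the bound $\|G-G_0\|_{\mathrm{op}}\le\delta:=C(\sigma^2+\sigma\lambda_1)(d_Sd_T/d_G)^{1/2}$, valid with probability tending to one as $d_G\to\infty$. Split
\[
G-G_0=d_G^{-1}\big(M_0^\top E+E^\top M_0\big)+\big(d_G^{-1}E^\top E-\sigma^2 I_{d_Sd_T}\big).
\]
For the cross term, write $U=(\bu_1,\dots,\bu_r)$, $H=(\bh_1,\dots,\bh_r)$ (orthonormal columns) and $\Lambda=\mathrm{diag}(\lambda_1,\dots,\lambda_r)$, so that $M_0^\top E=\sqrt{d_G}\,H\Lambda\,U^\top E$ and hence $d_G^{-1}\|M_0^\top E\|_{\mathrm{op}}\le d_G^{-1/2}\lambda_1\|U^\top E\|_{\mathrm{op}}$; since $U^\top E$ is an $r\times(d_Sd_T)$ matrix with i.i.d.\ $N(0,\sigma^2)$ entries, standard nonasymptotic bounds on the operator norm of Gaussian matrices give $\|U^\top E\|_{\mathrm{op}}\le C\sigma(d_Sd_T)^{1/2}$ with probability tending to one, i.e.\ a contribution of order $\sigma\lambda_1(d_Sd_T/d_G)^{1/2}$. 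The remaining term $d_G^{-1}E^\top E-\sigma^2 I$ is a centered sample-covariance fluctuation whose operator norm the same Gaussian bounds control by $C\sigma^2(d_Sd_T/d_G)^{1/2}$ (here using $d_Sd_T\lesssim d_G$). Adding the two yields $\|G-G_0\|_{\mathrm{op}}\le\delta$. The delicate point is that the cross term has to be bounded after projecting $E$ onto the $r$-dimensional column space of $M_0$; the black-box estimate $d_G^{-1}\|M_0\|_{\mathrm{op}}\|E\|_{\mathrm{op}}$ is of order $\sigma\lambda_1$ and would lose the factor $(d_Sd_T/d_G)^{1/2}$.

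Granting this bound and choosing the absolute constant in the hypothesis large enough that $g_k\ge 2\delta$, the remainder is routine perturbation theory. Weyl's inequality applied to $G$ and $G_0$ gives $|\widehat\lambda_k^2-(\lambda_k^2+\sigma^2)|\le\delta$, so $\widehat\lambda_k^2$ recovers $\lambda_k^2$ up to the noise bias $\sigma^2$ and an error $\delta$. Because $\lambda_k$ is simple, $\bh_k$ is, up to sign, the unique unit eigenvector of $G_0$ for a $g_k$-isolated eigenvalue, so the Davis--Kahan $\sin\Theta$ theorem yields, for a suitable sign of $\widehat\bh_k$,
\[
\sqrt{1-\langle\widehat\bh_k,\bh_k\rangle^2}\le\frac{2\|G-G_0\|_{\mathrm{op}}}{g_k}\le 1,
\]
whence $1-|\langle\widehat\bh_k,\bh_k\rangle|\le 1-\langle\widehat\bh_k,\bh_k\rangle^2\le 2\delta/g_k$ and $\|\widehat\bh_k-\bh_k\|_2^2=2(1-\langle\widehat\bh_k,\bh_k\rangle)\le 4\delta/g_k$.

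It remains to pass from $\widehat\bh_k$ to $\widehat\bv_k$ and $\widehat\bw_k$. Since $\|\vec^{-1}(\cdot)\|_F=\|\cdot\|_2$ and $\vec^{-1}(\bh_k)=\bv_k\bw_k^\top$ is a rank-one matrix with singular value $1$ (as $\|\bv_k\|=\|\bw_k\|=1$), hence singular gap $1$, the matrix $\vec^{-1}(\widehat\bh_k)$ is a Frobenius-norm perturbation of $\bv_k\bw_k^\top$ of size $\|\widehat\bh_k-\bh_k\|_2$; a second application of the Davis--Kahan/Wedin bound to its leading left and right singular vectors gives $1-|\langle\widehat\bv_k,\bv_k\rangle|\le C\|\widehat\bh_k-\bh_k\|_2^2\le C\delta/g_k$, and likewise $1-|\langle\widehat\bw_k,\bw_k\rangle|\le C\delta/g_k$. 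Taking the largest of the three bounds, enlarging $C$ to absorb numerical constants, and intersecting the finitely many high-probability events completes the argument. I expect the only genuinely delicate step to be the operator-norm bound on $G-G_0$ in the second paragraph: it is what forces the choice of reference matrix $G_0$ (so that the $\sigma^2 I$ ``bulk'' is removed cleanly rather than leaking into the perturbation) and the rank-$r$ treatment of the cross term.
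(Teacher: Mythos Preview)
Your proposal is correct and follows essentially the same route as the paper: both work with $G=d_G^{-1}\calM(\bX)^\top\calM(\bX)$, compare it to the reference $G_0=\sigma^2 I+\sum_k\lambda_k^2\bh_k\bh_k^\top$ (the paper's matrix $A$), split the perturbation into the centered sample-covariance piece and the cross term, exploit the low rank of $M_0$ to bound the cross term (your $U^\top E$ is exactly the paper's collection of $Z_k$'s), and then invoke standard eigenvalue/eigenvector perturbation followed by Wedin on the reshaped $\widehat\bh_k$. The only cosmetic differences are the names of the perturbation tools (you cite Weyl and Davis--Kahan where the paper cites Lidskii and a lemma of Koltchinskii), and you are explicit about the $\sigma^2$ bias in $\widehat\lambda_k^2$, which the paper's proof glosses over.
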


Theorem \ref{th:matricize} indicates that the eigenvalue $\lambda_k$ and its associated eigenvectors $\bv_k$ and $\bw_k$ can be estimated consistently whenever the eigen-gap
$$
g_k\gg  \sigma^2(d_Sd_T/d_G)^{1/2}.
$$
In the context of spatiotemporal expression data, the number of genes $d_G$ is typically much larger than $d_Sd_T$. Therefore, even if the eigen-gap is constant, the spatial and temporal PCA can still be consistently estimated.

\subsubsection{Power iteration}
Although Theorem \ref{th:matricize} suggests that the eigenvalue and eigenvector estimates obtained via our tensor folding scheme is consistent under fairly general conditions, they can actually be further improved. We can indeed use them as the initial value for power iteration or altering least squares to yield estimates that converge to the truth at faster rates. 

Power iteration is perhaps the most commonly used algorithm for tensor decomposation \citep{KoldarBader}. Specifically, let $\bb^{[0]}$ and $\bc^{[0]}$ be initial values for $\bv_k$ and $\bw_k$. Then at the $m$th ($m\ge 1$) iteration, we update $\ba$, $\bb$ and $\bc$ as follows:
%
\begin{enumerate}
\item[$\bullet$] Let $\ba^{[m]}=\ba/\|\ba\|$ where
$$\ba=\bX\times_2\bb^{[m-1]}\times_3\bc^{[m-1]};$$
\item[$\bullet$] Let $\bb^{[m]}=\bb/\|\bb\|$ where
$$\bb=\bX\times_1\ba^{[m]}\times_3\bc^{[m-1]}-\sigma^2\bb^{[m-1]};$$
\item[$\bullet$] Let $\bc^{[m]}=\bc/\|\bc\|$ where
$$\bc=\bX\times_1\ba^{[m]}\times_2\bb^{[m-1]}-\sigma^2\bc^{[m-1]}.$$
\end{enumerate}

The following theorem shows that the algorithm, after a certain number of iterations, yields estimates of the tensor principal components at an optimal convergence rate.

\begin{thm}
\label{th:power}
Let $\bb^{[m]}$ and $\bc^{[m]}$ be the estimates of $\bv_k$ and $\bw_k$ from the $m$th modified power iteration with initial values $\bb^{[0]}=\widehat{\bv}_k$ and $\bc^{[0]}=\widehat{\bw}_k$ obtained by tensor unfolding as described before. Suppose that the conditions of Theorem \ref{th:matricize} hold. Then there exist absolute constants $C_1,C_2>0$ such that if
$$
\lambda_k^2g_k\ge C_1(\sigma^2+\lambda_1\sigma)\lambda_1^2\sqrt{d_Sd_T\over d_G},
$$
then for any
$$
m\ge -C_2\log\left(\lambda_k^{-2}(\sigma^2+\lambda_1\sigma)\sqrt{d_S+d_T\over d_G}\right),
$$
we have
$$
\max\left\{1-|\langle\bb^{[m]},\bv_k\rangle|, 1-|\langle\bc^{[m]},\bw_k\rangle|\right\}=O_p\left(\lambda_k^{-2}(\sigma^2+\lambda_1\sigma)\sqrt{d_S+d_T\over d_G}\right),\quad {\rm as\ }d_G\to\infty.
$$
\end{thm}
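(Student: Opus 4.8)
The plan is to run the whole argument conditionally on the high-probability event supplied by Theorem~\ref{th:matricize}, so that the initializers obey $1-|\langle\bb^{[0]},\bv_k\rangle|\le\epsilon_0$ and $1-|\langle\bc^{[0]},\bw_k\rangle|\le\epsilon_0$ with $\epsilon_0:=C(\sigma^2+\sigma\lambda_1)g_k^{-1}(d_Sd_T/d_G)^{1/2}$, and then to show that each sweep of the modified power iteration shrinks these errors geometrically until they reach the floor $\rho_*:=C(\sigma^2+\sigma\lambda_1)\lambda_k^{-2}((d_S+d_T)/d_G)^{1/2}$ claimed in the statement. After flipping signs so that $\langle\ba^{[m]},\bu_k\rangle,\langle\bb^{[m]},\bv_k\rangle,\langle\bc^{[m]},\bw_k\rangle\ge0$, I would track the angular errors $\alpha_m:=(1-\langle\ba^{[m]},\bu_k\rangle^2)^{1/2}$, $\beta_m:=(1-\langle\bb^{[m]},\bv_k\rangle^2)^{1/2}$ and $\gamma_m:=(1-\langle\bc^{[m]},\bw_k\rangle^2)^{1/2}$, which are equivalent up to universal constants and squaring to the quantities in the theorem. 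The argument then splits into (i) a list of high-probability bounds on the noise functionals of $\bE$ that arise, (ii) a one-step recursion relating $(\alpha_m,\beta_m,\gamma_m)$ to $(\alpha_{m-1},\beta_{m-1},\gamma_{m-1})$ of ``contraction plus fixed noise floor'' type, and (iii) unwinding that recursion.

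For (i), decompose $\bX=\bT+\bE$ with $\bT=\sqrt{d_G}\sum_{j\le r}\lambda_j\,\bu_j\otimes\bv_j\otimes\bw_j$, and establish, on an event of probability tending to one, bounds uniform over unit vectors in suitable $\varepsilon$-nets: $\|\bE\times_1\bu\|_{\rm op}\lesssim\sigma(\sqrt{d_S}+\sqrt{d_T})$ and $\|\bE\times_2\bv\times_3\bw\|\lesssim\sigma\sqrt{d_G}$ for the terms linear in $\bE$, $|\langle\bu_j,\bE\times_2\bv\times_3\bw\rangle|\lesssim\sigma\sqrt{\log d_G}$ for individual coordinates, and, crucially, a bound on the term quadratic in $\bE$: the self-interaction $\bE\times_1(\bE\times_2\bb\times_3\bc)\times_3\bc$ has expectation proportional to $\sigma^2 d_G\,\bb$ — which is exactly what the $-\sigma^2\bb^{[m-1]}$ correction is designed to cancel once the normalizations are accounted for — and deviates from that expectation by at most $O(\sigma^2\sqrt{d_G}(\sqrt{d_S}+\sqrt{d_T}))$, and likewise for the $\bw$-mode analogue. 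These follow from Gaussian concentration for Lipschitz functionals, operator-norm bounds for Gaussian matrices, and the Hanson--Wright inequality; because the algorithm runs for only $m=O(\log(\cdot))$ sweeps and the iterates stay within a small neighborhood of $(\bu_k,\bv_k,\bw_k)$, the nets can be taken coarse and the union bound costs only logarithmic factors absorbed into the constants.

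For (ii), substitute $\ba^{[m]},\bb^{[m-1]},\bc^{[m-1]}$ into each update and separate signal from noise. The signal part $\bT\times_1\ba^{[m]}\times_3\bc^{[m-1]}$ has $\bv_k$-component of size $\asymp\sqrt{d_G}\lambda_k$ and $\bv_j$-component ($j\ne k$) of size $O(\sqrt{d_G}\lambda_1\alpha_m\gamma_{m-1})$, i.e.\ \emph{quadratic} in the incoming errors; after the $-\sigma^2$ correction removes the leading second-order bias, the noise contributes a component in ${\rm span}\{\bv_j:j\ne k\}$ of size $O(\sigma(\sqrt{d_S}+\sqrt{d_T})+\sigma^2\lambda_k^{-1}(\sqrt{d_S}+\sqrt{d_T}))$, plus terms along $\bv_k$ that only affect the normalization. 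Dividing by the norm $\asymp\sqrt{d_G}\lambda_k$ yields a recursion of the schematic form
$$
\max\{\beta_m,\gamma_m\}\ \le\ q\,\max\{\beta_{m-1},\gamma_{m-1}\}+C\rho_*,\qquad \alpha_m\ \le\ \frac{\lambda_1}{\lambda_k}\beta_{m-1}\gamma_{m-1}+C\frac{\sigma}{\lambda_k},
$$
where, after using that the high-dimensional mode only attains the crude floor $\alpha_m\lesssim\sigma/\lambda_k$, the effective factor $q$ is of order $\lambda_1\sigma/\lambda_k^2$ together with a cubic self-feedback term $(\lambda_1/\lambda_k)^2\max\{\beta_{m-1},\gamma_{m-1}\}^2$. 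The point of keeping the modes separate is that the error injected into $\bb^{[m]}$ and $\bc^{[m]}$ is smaller than the one injected into $\ba^{[m]}$ by the factor $((d_S+d_T)/d_G)^{1/2}\ll1$, which is what produces the improved rate. The hypothesis $\lambda_k^2g_k\ge C_1(\sigma^2+\lambda_1\sigma)\lambda_1^2(d_Sd_T/d_G)^{1/2}$ is precisely the statement that $(\lambda_1/\lambda_k)^2\epsilon_0$ is small, and it guarantees both that $q\le1/2$ for all $m$ and that the iterates never leave the basin of attraction of the $k$-th component. Iterating the recursion then gives $\max\{\beta_m,\gamma_m\}\le 2^{-m}\epsilon_0+2C\rho_*$, so for any $m\ge -C_2\log\rho_*$ the first term is dominated by $\rho_*$; converting back from the angular metric gives the stated conclusion.

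I expect (ii) — and within it the treatment of the quadratic-in-$\bE$ self-interaction — to be the main obstacle. One must show that after subtracting $\sigma^2\bb^{[m-1]}$ (respectively $\sigma^2\bc^{[m-1]}$) the residual is (a) small enough in norm, at the level $\sigma^2\lambda_k^{-2}(\sqrt{d_S}+\sqrt{d_T})/\sqrt{d_G}$, contributing the $\sigma^2$ piece of $\rho_*$, and (b) controlled \emph{uniformly} over the data-dependent iterates $\bb^{[m-1]},\bc^{[m-1]}$, which are not independent of $\bE$ — hence the $\varepsilon$-net plus Hanson--Wright bookkeeping — while also verifying that the fixed multiple $\sigma^2$ in the correction genuinely matches the leading coefficient of the self-interaction after all normalizations are tracked. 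The companion difficulty is confirming that the iterates remain trapped near the $k$-th component rather than drifting toward a nearby $\lambda_j$; this is exactly the role of the strengthened eigen-gap/signal condition, which is stronger than the bare consistency condition of Theorem~\ref{th:matricize} by the factor $(\lambda_1/\lambda_k)^2$.
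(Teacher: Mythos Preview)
Your plan is workable, but the paper takes a different and cleaner route that sidesteps precisely the two difficulties you flag as ``main obstacles.'' Instead of tracking all three modes, the paper eliminates $\ba^{[m]}$ by substituting it into the $\bb$- and $\bc$-updates: since $\ba^{[m]}$ is (up to normalization) $\bX\times_2\bb^{[m-1]}\times_3\bc^{[m-1]}$, the update for $\bb^{[m]}$ becomes a contraction of the fourth-order \emph{sample moment tensor} $d_G^{-1}\sum_g X_g\otimes X_g$ against $\bb^{[m-1]},\bc^{[m-1]}$. That tensor splits as $\sum_j\lambda_j^2(\bv_j\otimes\bw_j)^{\otimes 2}+\sigma^2\calM^{-1}(I_{d_Sd_T})+\Delta_2+\Delta_3$, where $\Delta_2$ is the centered noise covariance and $\Delta_3$ the signal--noise cross term; the $\sigma^2\calM^{-1}(I)$ piece, after contraction, yields exactly $\sigma^2\bb^{[m-1]}$ and is what the correction removes. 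The recursion then runs on $\tau_m=\min\{|\langle\bb^{[m]},\bv_k\rangle|,|\langle\bc^{[m]},\bw_k\rangle|\}$ alone, with no $\alpha_m$.

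The payoff is twofold. First, your concern about controlling the quadratic-in-$\bE$ term \emph{uniformly over data-dependent iterates} evaporates: $\Delta_2$ and $\Delta_3$ are fixed fourth-order tensors not involving the iterates, and the paper bounds their tensor spectral norms once and for all by $O\bigl((\sigma^2+\lambda_1\sigma)\sqrt{(d_S+d_T)/d_G}\bigr)$ via an $\varepsilon$-net plus $\chi^2$ concentration (Lemmas~\ref{le:delta2}--\ref{le:delta3}), replacing your Hanson--Wright bookkeeping. Second, the role of the $-\sigma^2\bb^{[m-1]}$ correction is transparent in this picture---it subtracts exactly the identity part---whereas in your direct formulation the self-interaction, after dividing by $\|\ba\|\asymp\sqrt{d_G}\lambda_k$, carries a different scalar, so the cancellation you assert is not exact (it still only feeds back a contractive $(\sigma/\lambda_k)^2\beta_{m-1}$ term, so your argument survives, but the accounting is messier). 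Your three-mode approach would work; the paper's covariance-tensor reformulation is what keeps the proof short.
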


Note that we only require that the number of genes $d_G$ diverges in Theorem \ref{th:power}, which is the most relevant setting in spatiotemporal expression data. If the singular values $\lambda_1,\ldots,\lambda_r$ are simple and finite, as typically the case in practice, then Theorem \ref{th:power} indicates that the spatial and temporal PCAs can be estimated  at the rate of convergence $\sqrt{(d_S+d_T)/d_G}$. This is to be compared with the unfolding estimates which converge at the rate of $\sqrt{d_Sd_T/d_G}$.

It is also worth noting, assuming that $\lambda_k$s and $\sigma$ are finite, the rate of convergence given by Theorem \ref{th:power} is optimal in the following sense. Suppose that $\bv_k$ is known in advance, it is not hard to see that $\bX\times_2\bv_k$ is a sufficient statistics for $\bw_k$. Because $\bw_k$ is the usual principal component of $\bX\times_2\bv_k$, following classical theory for principal components \citep[see, e.g.,][]{muirhead2009aspects}, we know that the optimal rate of convergence for estimating $\bw_k$ is of the order $\sqrt{d_T/d_G}$. Similarly, even if $\bw_k$ is known apriori, the optimal rate of convergence for estimating $\bv_k$ would be of the order $\sqrt{d_S/d_G}$. Obviously, not knowing either $\bv_k$ or $\bw_k$ only makes their estimation more difficult. Therefore, the rate of convergence established in Theorem \ref{th:power} is the best attainable.

A key difference between the power iteration described above and the usual ones is that subtract $\sigma^2\bb^{[m-1]}$ and $\sigma^2\bc^{[m-1]}$ when updating $\bb$ and $\bc$ at each iteration. This modification is motivated by a careful examination of the effect of noise $\bE$ on the power iteration. Although not essential for the performance of the final estimate, this adjustment allows for faster convergence of the power iterations. In practice, when $\sigma$ is unknown, one can estimate it by the sample variance of the residual tensor with the initial estimate. A careful inspection of the proof of Theorem \ref{th:power} suggests that the results continue to hold in this case because of the consistency of the initial value.

\section{Numerical Experiments}
\label{sec:sim}

To demonstrate the merits of the tensor PCA method described in the previous section, we conducted several sets of simulations.

\subsection{Estimation accuracy}
We begin with a simple simulation setup designed to investigate the effect of dimensionality and signal strength on the estimation of tensor accuracy. In particular, we simulated data tensor from the following rank one tensor PCA model:
\begin{equation}
\label{eq:rankone}
\bX=\sqrt{d}\lambda \bu\otimes\bv\otimes \bw+\bE.
\end{equation}
To assess the effect of dimensionality, we consider cubic tensors of dimension $\R^{d\times d\times d}$ where $d=25, 50, 100$ or $200$. The principal components $\bv$ and $\bw$, as well as the loadings $\bu$ were uniformly sampled from the unit sphere in $\R^{d}$. We recall that a uniform sample from the unit sphere in $\R^d$ can be obtained by $Z/\|Z\|$ where $Z\sim N(0,I_d)$. The noise tensor $\bE$ is a Gaussian ensemble whose entries are independent standard normal variables.

To assess the effect of signal-to-noise ratio on the quality of our estimates, we set $\lambda=3,4$ or $5$. For each combination of $d$ and $\lambda$, 200 $\bX$s were simulated from model \eqref{eq:rankone}. For each simulated data tensor $\bX$, we computed the estimated principal components both by tensor unfolding (UFD) and by power iteration (PIT) using tensor unfolding for initialization as discussed before. The estimation error was measured by $\max\{1-|\langle\hat{\bv}, \bv\rangle|, 1-|\langle\hat{\bw}, \bw\rangle|\}$. The results, averaged over the 200 runs, are summarized in Figure \ref{fig:sim1}. It is evident from the comparison, power iteration improves the quality of estimates, especially for situations with low signal-to-noise ratio, that is small $\lambda$, or high dimensionality, that is large $d$. These observations are in agreement with the theoretical analysis presented in Theorems \ref{th:matricize} and \ref{th:power}.

\begin{figure}[htbp]
	\centering
	\includegraphics[scale=0.5]{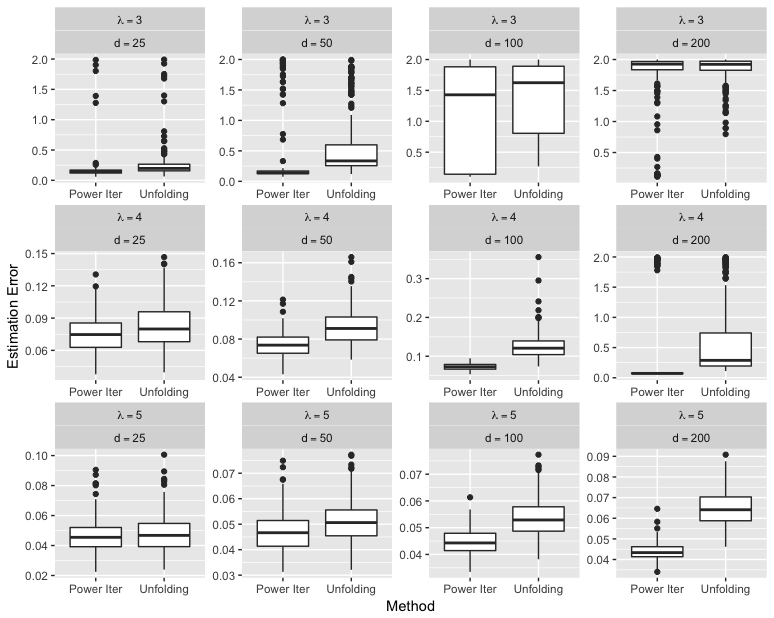}
	\caption{Comparison of estimation error based on tensor unfolding and power iteration with different signal strength ($\lambda$) and dimension ($d$). The boxplots are produced based on 200 simulation runs.}\label{fig:sim1}
\end{figure}

In general, we can see that power iteration can improve the accuracy of estimates based on tensor unfolding. Such an improvement, as suggested by our theoretical development hinges upon the consistency of the unfolding estimates. In the most difficult case when $\lambda=3$ and $d=200$, tensor unfolding fails to provide a consistent estimate of the principal components, and as a result, power iteration also performs poorly. In all other cases, power iteration significantly improves upon the unfolding estimate. The improvement is least significant in the easiest case with $\lambda=5$ and $d=25$ when unfolding estimate already appears to be quite accurate.

To gain further insights into the operating characteristics of the power iteration, we examine how the estimation error changes from iteration to iteration for 50 typical simulation runs with $\lambda=4$ and $d=200$ in Figure \ref{fig:sim1b}. First, it is evident to see the estimation error reduces quickly with the iterations. It is also worth noting that the algorithm converges in only several iterations. This has great practical implication as computation is often a significant issue when dealing with tensor data.

\begin{figure}[htbp]
	\centering
	\includegraphics[scale=0.6]{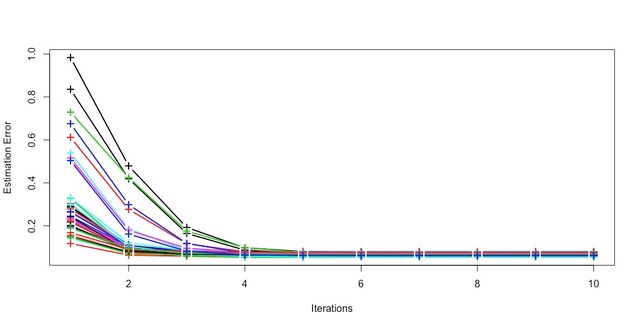}
	\caption{Estimation error as a function of iterations for 50 typical simulated datasets with $\lambda=4$ and $d=200$.}\label{fig:sim1b}
\end{figure}

Our development was motivated by the analysis of spatiotemporal expression data. To better assess the performance of our method in such a context, we now consider a simulation setting designed to mimic it. More specifically, we simulated $2000\times 10\times 13$ data tensors from tensor PCA model of rank four:
%
$$
\bX=\sqrt{d}\lambda\sum_{k=1}^4{5-k\over 4} \cdot\bu_k\otimes\bv_k\otimes \bw_k+\bE,
$$
where we fix $\sigma=1$ and let $\lambda$ vary among $4, 8$ and $16$. The eigenvectors $\bu$, $\bv$ and $\bw$ were uniformly sampled from the Grassmaniann of conformable dimensions. This simulation setting allows us to appreciate the effect of eigengap and eigenvalue, as well as the unequal dimensions on the accuracy of our estimates. We compare the proposed tensor PCA approach with the classical PCA approach for estimating each of the principal component. The results reported in Figure \ref{fig:sim2} again confirms our theoretical findings and suggests the superior performance of the proposed approach over the classical PCA.

\begin{figure}[htbp]
	\centering
	\includegraphics[scale=0.5]{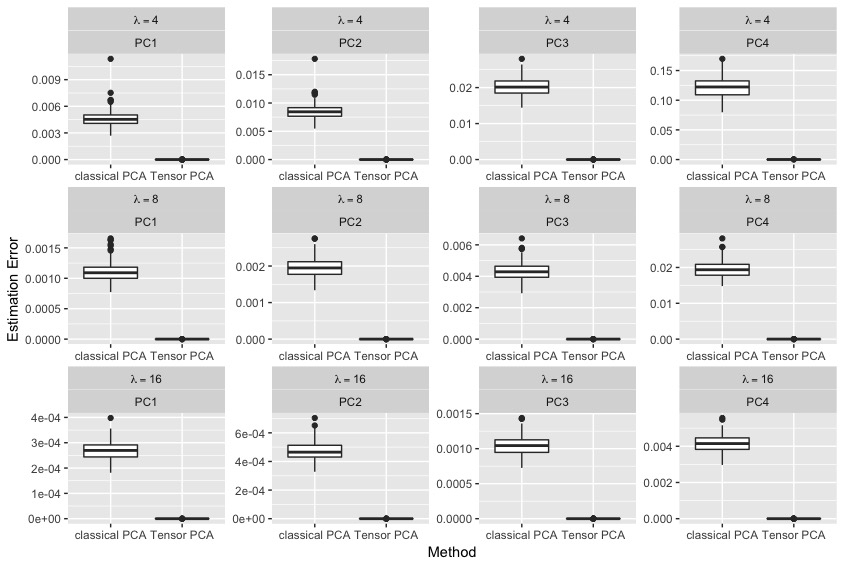}
	\caption{Comparison of estimation error based on classical PCA and the proposed tensor PCA with different signal strength ($\lambda$). The boxplots are produced based on 200 simulation runs.}\label{fig:sim2}
\end{figure}

\subsection{Clustering based on tensor PCA}
Oftentimes in practice, PCA is not the final goal of data analysis. It is commonly used as an initial step to reduce the dimensionality before further analysis. For example, PCA based clustering is often performed when dealing with gene expression data. See, e.g., \cite{yeung01pca}. Similarly, our tensor PCA can serve the same purpose. To investigate the utility of our approach in this capacity, we conducted a set of simulation studies where for each simulated dataset, we first estimated the loadings $\bu_k$s and then applied clustering to the loadings. To fix ideas, we adopted the popular k-means technique for clustering although other alternatives could also be employed. 

Motivated by the dataset from \cite{kang2011spatio} which we shall discuss in further details in the next section, we simulated a data tensor of size $\R^{1087\times 10\times 13}$ from the following model:
\begin{equation}
\label{eq:simmodel}
\bX=\sum_{k=1}^3\lambda_k \bu_k\otimes\bv_k\times\bw_k+\sigma^2\bE.
\end{equation}
where $\lambda_1=337.8$, $\lambda_2=27.1$, $\lambda_3=9.0$, and $\sigma=0.2$. These values, along with the principal components $\bv_k$ and $\bw_k$ are based on estimates when fitting a tensor PCA model to the data from \cite{kang2011spatio}. The clusters, induced by the loadings $\bu_k$, were generated as follows. For a given number $K$ of clusters, we first generated the cluster centroids $C\in\mathbb{R}^{K\times 3}$ from right singular vector matrix of $K$ by $3$ Gaussian random matrix. We then assigned clusters among $1087$ observations and generated the observed tensor with $\sigma=1,5,10,20$, representing different levels of signal-to-noise ratio.

For comparison purposes, we also considered using the classical PCA based approach to reduce the dimensionality. For each method, we took the loadings from the first four directions and then applied k-means to infer the cluster membership. We used adjusted Rand Index as a means of measuring the clustering quality. The results for each method and a variety of combinations of dimension, averaged over 200 runs, are reported in Table \ref{tab:RandIdx}. The results suggest that tensor PCA based clustering is superior to that based on the classical PCA.

\begin{table}[ht]
\label{tab:RandIdx}
\centering
\begin{tabular}  {|c|c|c|}
\hline
noise & classical PCA & tensor PCA \\\hline
20&0.118(0.074)&0.234(0.063)\\\hline
10&0.166(0.073)&0.364(0.072)\\\hline
5&0.242(0.106)&0.659(0.113)\\\hline
1&0.592(0.227)&0.989(0.037)\\\hline
\end{tabular}%
\caption{Clustering performance comparison between the classical  PCA and tensor PCA, in terms of Rand index averaged over 200 simulation runs. Numbers in parentheses are the standard deviations.}
\end{table}

\section{Application to Human Brain Expression Data}
We now turn to the spatiotemporal expression data from \cite{kang2011spatio} that we alluded to earlier.

\label{sec:real}
\subsection{Dataset description and preprocessing}
\cite{kang2011spatio} reported the generation and analysis of exon-level transcriptome and associated genotyping data from multiple brain regions and neocortical areas of developing and adult post-mortem human brains. To characterize the spatiotemporal dynamics of the human brain transcriptome, they created a 15-period system spanning the periods from embryonic development to late adulthood,  as shown in Table \ref{tab:kang}.

\begin{table}
\centering
\begin{tabular}{ccc}
\hline
\hline
\textbf{Period}&\textbf{Description}&\textbf{Age}\\\hline
1&Embryonic&4PCW$\leq$Age$<$8PCW\\
2&Early fetal&8PCW$\leq$Age$<$10PCW\\
3&Early fetal & 10PCW$\leq$Age$<$13PCW\\
4&Early mid-fetal&13PCW$\leq$Age$<$16PCW\\
5&Early mid-fetal&16PCW$\leq$Age$<$19PCW\\
6&Late mid-fetal&19PCW$\leq$Age$<$24PCW\\
7&Late fetal&24PCW$\leq$Age$<$38PCW\\
8&Neonatal and early infancy&0M(birth)$\leq$Age$<$6M\\
9&Late infancy&6M$\leq$Age$<$12M\\
10&Early childhood&1Y$\leq$Age$<$6Y\\
11&Middle and late childhood&6Y$\leq$Age$<$12Y\\
12&Adolescence&12Y$\leq$Age$<$20Y\\
13&Young adulthood&20Y$\leq$Age$<$40Y\\
14&Middle adulthood&40Y$\leq$Age$<$60Y\\
15&Late adulthood&60Y$\leq$Age\\\hline
\end{tabular}
\caption{Periods of human development and adulthood as defined by \cite{kang2011spatio}: M -- postnatal months; PCW -- post-conceptional weeks; Y -- postnatal years.}
\label{tab:kang}
\end{table}

Transient prenatal structures and immature and mature forms were sampled from 16 brain regions, including 11 neocortex (NCX) areas, from multiple specimens per period. In total, the data include 31 males and 26 females with age ranging from 5.7 weeks post-conception to 82 years. Among them, 39 subjects have data from both hemispheres. Except for Periods 1 and 2 as specified in Table \ref{tab:kang}, tissue samples from 16 brain regions were collected, including the cerebellar cortex (CBC), mediodorsal nucleus of the thalamus (MD), striatum (STR), amygdala (AMY), hippocampus (HIP) and 11 areas of the neocortex, including the orbital prefrontal cortex (OFC), dorsolateral prefrontal cortex (DFC), ventrolateral prefrontal cortex (VFC), medial prefrontal cortex (MFC), primary motor cortex (M1C), primary somatosensory cortex (S1C), posterior inferior parietal cortex (IPC), primary auditory cortex (A1C), posterior superior temporal cortex (STC), inferior temporal cortex (ITC) and the primary visual cortex (V1C). Readers are referred to  \cite{kang2011spatio} for more discussion on the sampling location for tissues used in the study.

%

The original dataset contains expression measurement for $17568$ gene, obtained through the Affymetrix GeneChip Human Exon 1.0 ST Array platform. Appropriate normalization and transformation were applied as detailed in \cite{kang2011spatio}. The sample sizes varies among 16 brain regions and 15 time periods. Each of the 57 post-mortem brains was collected at a certain time point of development, expression levels of which were measured across all the regions with several missing values. Periods 1 and 2 were excluded from our analysis because they correspond to embryonic and early fetal development, when most of the 16 brain regions sampled in future periods have not differentiated. Since neocortex regions are quite different from the other 5 regions and we are more interested in neocortex areas, we only included 10 neocortex areas in our analysis, with the exception of V1C because of its location and distinct expression profiles \citep{pletikos2014temporal}.

Following \cite{hawrylycz2015canonical}, we selected genes with reproducible spatial patterns across individuals according to their correlations between samples, leading to a total of $1087$ genes. After taking mean across subjects with same gene, location, and time period we got a data tensor of size $d_G=1087$, $d_S=10$ and $d_T=13$.

\subsection{Analysis based on tensor PCA}
Before applying the tensor PCA, we first centered the gene expression measurements by subtracting the mean expression level for each gene because we are primarily interested in the spatial and temporal dynamics of the expression levels. To remove the mean level, however it is more subtle than the classical PCA, we want to remove both mean spatial effect and mean temporal effect. More specifically, we applied tensor PCA to $\tilde{\bX}\in {\mathbb R}^{d_G\times d_T\times d_S}$ where
$$
\tilde{x}_{gst}=x_{gst}-\bar{x}_{g\cdot t}-\bar{x}_{gs\cdot}+\bar{x}_{g\cdot\cdot}
$$
and $\bX$ is the original data tensor. As in the classical PCA, we can look at the scree plot to examine the contribution of each component in the tensor PCA model. We can see that the contribution from the principal components quickly tapers off. We shall focus on the top three components to fix ideas.

\begin{figure}[htbp]
\centering
\includegraphics[scale=0.7]{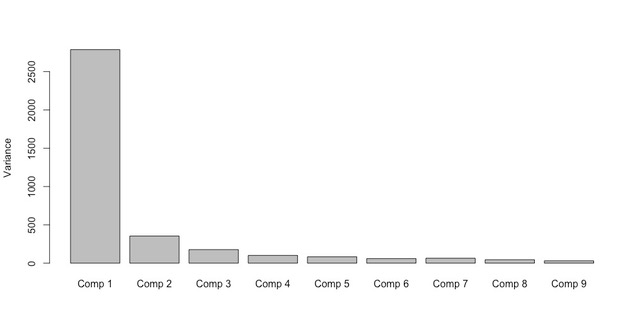}
\caption{Scree plot of the tensor PCA for the dataset from \cite{kang2011spatio}.}
\label{fig:chooseR}
\end{figure}

To gain insights, the top three spatial and temporal principal components are given in Figure \ref{fig:bases3}. And the top three spatial factors are mapped to brain neocortex regions in Figure \ref{fig:regional_3D}, where the color represents value, the darker the higher. It is interesting to note, from the temporal trajectories, that the first two factors show clear signs of prenatal development (until Period 7) while the third factor exhibits increasing influence from young childhood (from Period 11). Factor 1 shows a spatial gradient effect that expression level tapers off from ITC to MFC or the other way. Remarkably, the same effect was reported in \cite{miller2014transcriptional}, which is explained by intrinsic signaling controlled partly by graded expression of transcription factors. Some representative genes such as FGFR3 and CBLN2 were found to preserve in both human and mouse neocortex. Taking temporal effect into consideration, factor 1 indicates that the gradient effect diminishes from early fetal (Period 3) to late fetal (Period 7), and almost vanishes after early infancy. Same effects were observed in \cite{pletikos2014temporal} that areal transcriptional become more synchronized during postnatal development. Factor 2 suggests the importance of prenatal development of M1C and S1C. Both areas are well represented in the second factor while essentially absent from the other factors. This observation based on our analysis seems to agree with recent findings in neuroscience that activation patterns of extremely preterm infants' primary somatosensory cortex area are predictive of future development outcome. See, e.g., \cite{nevalainen14development}. Factor 3 distinguishes middle adulthood (Period 14) and late adulthood (Period 15) with different value in ITC and MFC comparing other 8 regions. This effect was reported in \cite{pletikos2014temporal} that MFC and ITC have much higher number of neocortical interareal differentially expressed (DEX) genes. In term of aging, declining metabolism in MFC correlates with declining cognitive function \citep{pardo2007brain, gutchess2007aging, fjell2009high, donoso2014foundations}, and shrinkage of ITC increases with age \citep{raz2005regional}. When we consider 3 factors together, we can validate the temporal hourglass pattern observed in \cite{pletikos2014temporal} that huge number of DEX genes exist before infancy (Period 8), and areal differences almost vanish from infancy to adulthood (Period 14) and reappear in late adulthood (Period 15). 

To better understand these three factors, we conducted gene set enrichment analysis based on Gene Ontology (http://geneontology.org/) for each factor. We calculated the relative weight of factor $i$ for each gene by $|u_i|/{\sum_{j=1}^3 |u_j|}$, where $u\in \R^3$ is one row of gene factors. For each factor, we chose the top 15\% quantile genes to form the gene sets. The results are presented in Table \ref{tab:enrich_factors}. Factor 1 relates with anatomical structure development, and this result is consistent with its spatial gradient pattern and decrease in magnitude of temporal pattern. Factor 2 has enriched term in sensory organ development, and this agrees with its huge magnitude in S1C. Besides, regulation of anatomical structure morphogenesis term supports the smooth spatial pattern from S1C and M1C to MFC and ITC. Factor 3 is enriched in innervation related with aging \citep{coyle1983alzheimer, lauria1999epidermal}, startle response associated with ITC \citep{sabatinelli2005parallel}, and chemical synaptic transmission related with aging \citep{luebke2004normal}. 

\begin{table}
\centering
\begin{tabular}{c|c|c}
\hline
\hline
Factor&Enriched Term& P-value with Bonferroni Correction\\\hline
\multirow{ 2}{*}{1}&anatomical structure development&4.65E-04\\
&developmental process&2.93E-03\\\hline
\multirow{ 4}{*}{2}&nervous system development&4.20E-04\\
&sensory organ development&1.09E-03\\
&positive regulation of signal transduction&1.36E-02\\
&generation of neurons&1.98E-02\\\hline
\multirow{ 5}{*}{3}&chemical synaptic transmission&3.23E-06\\
&multicellular organismal response to stress&7.32E-04\\
&nucleic acid metabolic process&9.09E-04\\
&ion transmembrane transport&8.02E-04\\
&innervation&1.62E-02\\
&startle response&2.79E-02\\\hline
\end{tabular}
\caption{Gene enrichment analysis results on factors}
\label{tab:enrich_factors}
\end{table}


\begin{figure}[htbp]
\centering
\includegraphics[scale=0.65]{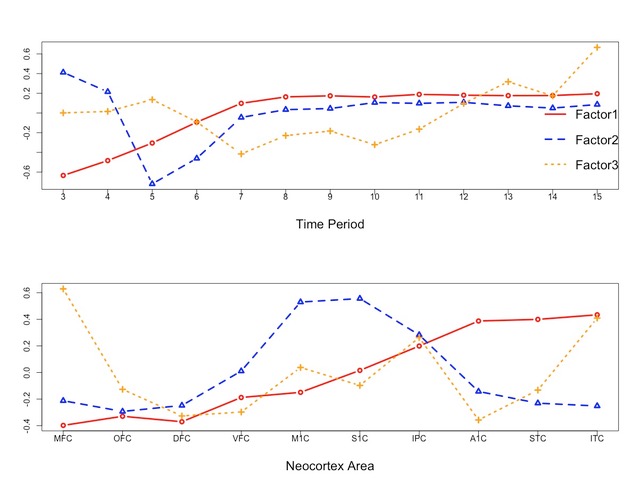}
\caption{Temporal and spatial factors of tensor PCA for the dataset from \cite{kang2011spatio}.}
\label{fig:bases3}
\end{figure}

\begin{figure}[htbp]
\centering
\subfigure{\includegraphics[scale=0.6]{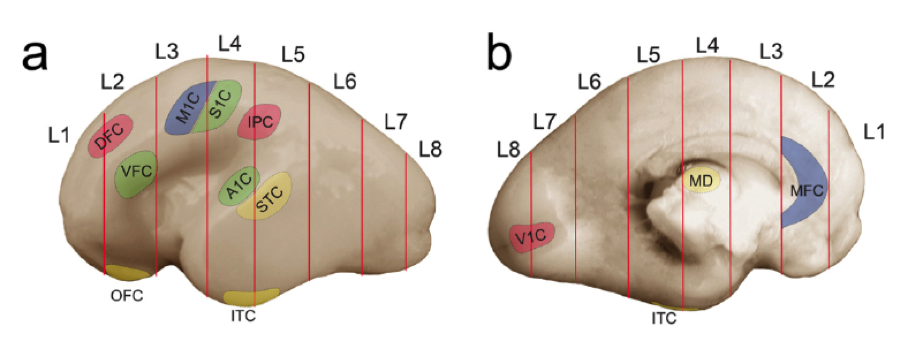}}\\
\subfigure{\includegraphics[scale=0.6]{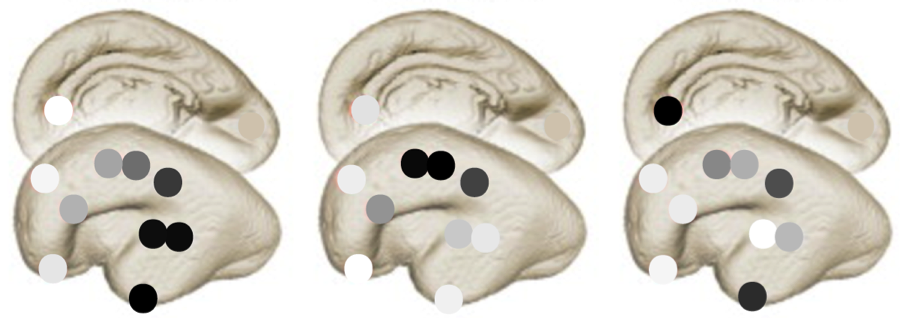}}
\caption{Spatial factors on locations of neocortex from Period 6.}
\label{fig:regional_3D}
\end{figure}


To further examine the meaning of the spatial factors, we use the three spatial factors as the coordinates for each of the 10 locations in a 3D plot as shown in Figure \ref{fig:regionPCA}. Remarkably the spatial patterns of these locations are fairly consistent with the physical locations of these neocortex regions in the brain.

\begin{figure}[htbp]
\centering
\includegraphics[scale=0.4]{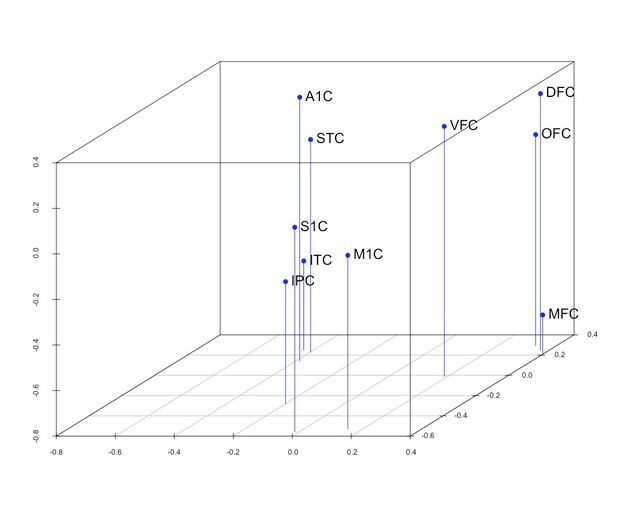}
\caption{Loadings on the top three spatial factors for each of the ten neocortex regions.}
\label{fig:regionPCA}
\end{figure}

Based on three dimensional representation of genes, we identify ten outliers, which are: SLN, GPR64, PROKR2, NEFL, BCL6, GABRQ, DNM1DN3-4, CALB1, PVALB, and VAMP1. GPR64 belongs to G protein-coupled receptors, which underlie the responses to both chemical and mechanical stimuli in olfactory sensory neurons \citep{connelly2015g}. We found that GPR64 achieves peak value in S1C at early mid-fetal (Period 5), which suggests that early mid-fetal may be a critical period for olfactory sensory development. PROKR2 is essential for the regulation of circadian behavior and mice lacking PROKR2 lost precision in timing the onset of nocturnal locomotor activity \citep{prosser2007prokineticin}. It gets peaked in S1C, IPC, A1C, STC, and ITC at Period 5-7, and these areas are associated with receiving and interpreting sensory, auditory processing, and recognizing visual stimuli. These are consistent with PROKR2's functions. Reduced expression of NEFL is observed in anterior cingulate gyrus, motor cortex, and thalamus of autism patients \citep{anitha2012brain}. BCL6 controls neurogenesis \citep{tiberi2012bcl6}. It gets maximum values in S1C and M1C at Period 5, which suggests that the neurogenesis starts earlier in these two regions comparing to others. CALB1 is found to be expressed in certain neuronal subtypes \citep{usoskin2015unbiased}. This may suggest that composition of this type of neuron has a huge spatial and temporal variation. PVALB has been reported to associate with neuropsychiatric disorders including schizophrenia and autism \citep{kaiser2016transgenic}. It has higher expression value in S1C and A1C, which suggests some interneurons expressing PVALB have specific functions related to sensory of S1C and A1C. VAMP1 is the physiologically relevant toxin target in motor neurons \citep{peng2014widespread}, and we indeed observe that it achieves higher value at M1C. 

Finally, we used the factors estimated based on our tensor PCA model as the basis for clustering. In particular, we applied k-means clustering with $k=5$ clusters to the three dimensional factor loadings. The resulting cluster sizes are 156, 167, 332, 280, and 152, respectively. Gene set enrichment analysis based on Gene ontology was performed for each group with the results presented in Table \ref{tab:enrich}. 

\begin{table}
\centering
\begin{tabular}{c|c|c}
\hline
\hline
Cluster&Enriched Term& P-value after Bonferroni Correction\\\hline
\multirow{ 5}{*}{1}&nervous system development&8.58E-11\\
&anatomical structure development&3.43E-09\\
& neurogenesis& 1.63E-05\\
& regulation of developmental process&3.12E-05\\
&cell communication&9.93E-05\\\hline
\multirow{ 5}{*}{2}&chemical synaptic transmission&8.38E-08\\
&inorganic ion transmembrane transport&2.98E-04\\
&nucleic acid metabolic process&6.57E-04\\
&regulation of postsynaptic membrane potential&8.45E-04\\
&multicellular organismal response to stress &1.24E-02\\\hline
\multirow{ 5}{*}{3}&single-organism process&1.81E-10\\
&regulation of localization&9.92E-04\\
&single organism signaling&1.06E-03\\
&response to stimulus&1.59E-03\\
&regulation of multicellular organismal process&6.18E-03\\\hline
\multirow{ 4}{*}{4}&single-organism process &4.13E-06\\
&anatomical structure development&2.72E-04\\
&nervous system development&4.05E-04\\
&signal transduction&4.84E-02\\\hline
\multirow{ 7}{*}{5}&single-organism developmental process &6.18E-05\\
& forebrain development &4.77E-03\\
&chemical synaptic transmission &1.18E-03\\
&neuron projection morphogenesis&9.42E-03\\
&axon development &9.65E-03\\
&regulation of neuron differentiation&3.23E-02\\
&regulation of smooth muscle cell migration &3.88E-02\\\hline
\end{tabular}
\caption{Gene enrichment analysis results}
\label{tab:enrich}
\end{table}

These results show a clear separation among different functional groups. This further indicates that the spatiotemporal pattern of a gene informs its functionality. Moreover, enriched terms such as anatomical structure development, forebrain development are highly associated with the spatial areas of neocortex, which again suggests the the meaningfulness of the tensor principal components.

\section{Conclusions}
\label{sec:dis}
In this paper, we have introduced a generalization of the classical PCA that can be applied to data in the form of tensors. We also proposed efficient algorithms to estimate the principal components using a novel combination of power iteration and tensor unfolding. Both theoretical analysis and numerical experiments point to the efficacy of our method. Although the methodology is generally applicable to other applications, our development was motivated by the analysis of spatiotemporal expression data which in recent years have become a common place in studying brain development among other biological processes. An application of our method to one such example further demonstrates its potential usefulness.

 
\section{Proofs}
\label{sec:proof}
\begin{proof}[Proof of Theorem \ref{th:matricize}]
Write
$$
\bT=\sqrt{d_G}\sum_{k=1}^r \lambda_k \left(\bu_k\otimes\bv_k\otimes \bw_k\right).
$$
Then $\bX=\bT+\bE$. Denote by
$$X_g=(x_{gst})_{1\le s\le d_S, 1\le t\le d_T}.$$
Let $T_g$, $E_g$ be similarly defined. Then
\begin{eqnarray*}
{1\over d_G}\calM(\bX)^\top\calM(\bX)&=&{1\over d_G}\sum_{g=1}^{d_G}\vec(X_g)\otimes \vec(X_g)\\
&=&\calM\left({1\over d_G}\sum_{g=1}^{d_G}X_g\otimes X_g\right)\\
&=&\calM\left({1\over d_G}\sum_{g=1}^{d_G} T_g\otimes T_g +{1\over d_G}\sum_{g=1}^{d_G} E_g\otimes E_g +{1\over d_G}\sum_{g=1}^{d_G} \left(T_g\otimes E_g+E_g\otimes T_g\right)\right).
\end{eqnarray*}
Hereafter, with slight abuse of notation, we use $\calM$ to denote the matricization operator that collapses the first two, and remaining two indices of a fourth order tensor respectively. Observe that
$$T_g=\sqrt{d_G}\sum_{k=1}^r \lambda_k u_{kg} \left(\bv_k\otimes \bw_k\right).$$
Therefore
$$
T_g\otimes T_g = d_G\sum_{k_1,k_2=1}^r \lambda_{k_1}\lambda_{k_2} u_{k_1g}u_{k_2g}\left(\bv_{k_1}\otimes \bw_{k_1}\otimes\bv_{k_2}\otimes \bw_{k_2}\right).
$$
Because of the orthogonality among $\bu_k$s, we get
$$
{1\over d_G}\sum_{g=1}^{d_G} T_g\otimes T_g=\sum_{k=1}^r \lambda_k^2 \left((\bv_{k}\otimes \bw_{k})\otimes(\bv_{k}\otimes \bw_{k})\right).
$$

On the other hand, note that
$$
\calM\left({1\over d_G}\sum_{g=1}^{d_G} E_g\otimes E_g\right)={1\over d_G}\sum_{g=1}^{d_G} \left(\vec(E_g)\otimes \vec(E_g)\right).
$$
In other words, $\calM(d_G^{-1}\sum_{g=1}^{d_G} E_g\otimes E_g)$ is the sample covariance matrix of independent Gaussian vectors
$$
\vec(E_g)\sim N(0, I_{d_S\cdot d_T}), \qquad 1\le g\le d_G.
$$
Therefore, there exists an absolute constant $C_1>0$ such that
$$
\left\|\calM\left({1\over d_G}\sum_{g=1}^{d_G} E_g\otimes E_g\right)-I_{d_S\cdot d_T}\right\|\le C_1\sigma^2\sqrt{d_Sd_T\over d_G}.
$$
with probability tending to one as $d_G\to\infty$. See, e.g., \cite{vershynin12}.

Finally, observe that
$$
\sum_{g=1}^{d_G}T_g\otimes E_g=\sqrt{d_G}\sum_{k=1}^r \lambda_k \left[\bv_k\otimes \bw_k\otimes \left(\sum_{g=1}^{d_G}u_{kg}E_g\right)\right]=:\sqrt{d_G}\sum_{k=1}^r \lambda_k \left(\bv_k\otimes \bw_k\otimes Z_k\right).
$$
By the orthogonality of $\bu_k$s, it is not hard to see that $Z_k$s are independent Gaussian matrices:
$$
\vec(Z_k)\sim N(0,\sigma^2I_{d_S\cdot d_T}),
$$
so that there exists an absolute constant $C_2>0$ such that
$$
\left\|\calM\left({1\over d_G}\sum_{g=1}^{d_G} \left(T_g\otimes E_g+E_g\otimes T_g\right)\right)\right\|\le {2\over d_G}\left\|\calM\left(\sum_{g=1}^{d_G} T_g\otimes E_g\right)\right\|\le C_2\lambda_1\sigma\sqrt{d_Sd_T\over d_G},
$$
with probability tending to one.

To sum up, we get
$$
\left\|{1\over d_G}\calM(\bX)^\top\calM(\bX)-A\right\|\le (C_1\sigma^2+C_2\lambda_1\sigma)\sqrt{d_Sd_T\over d_G}.
$$
where
$$
A=I_{d_S\cdot d_T}+\sum_{k=1}^r \lambda_k^2 \left[\vec\left(\bv_{k}\otimes \bw_{k}\right)\otimes\vec\left(\bv_{k}\otimes \bw_{k}\right)\right].
$$
It is clear that
$$
\left\{(1+\lambda_k^2, \vec(\bv_k\otimes \bw_k)): 1\le k\le r\right\}
$$
are the leading eigenvalue-eigenvector pairs of $A$.

Recall that $(\widehat{\lambda}_k^2, \widehat{\bh}_k)$ is the $k$th eigenvalue-eigenvector pair of $\calM(\bX)^\top\calM(\bX)$. By Lidskii's inequality,
$$
|\widehat{\lambda}_k^2-\lambda_k^2|\le (C_1\sigma^2+C_2\lambda_1\sigma)\sqrt{d_Sd_T\over d_G}.
$$
See, e.g., \cite{Lid1950, katobook}. Then
\begin{eqnarray*}
\|\vec^{-1}(\widehat{\bh}_k)-\bv_k\otimes\bw_k\|^2&\le& \|\vec^{-1}(\widehat{\bh}_k)-\bv_k\otimes\bw_k\|_{\rm F}^2\\
&=&2-2\langle \widehat{\bh}_k,\vec(\bv_k\otimes\bw_k)\rangle\\
&\le&2\left\|\widehat{\bh}_k\otimes \widehat{\bh}_k-\vec(\bv_k\otimes\bw_k)\otimes \vec(\bv_k\otimes\bw_k)\right\| \\
&\le&8(C_1\sigma^2+C_2\lambda_1\sigma)g_k^{-1}\sqrt{d_Sd_T\over d_G},
\end{eqnarray*}
where the last inequality follows from Lemma 1 from \cite{koltchinskii14}. For large enough $C$, we can ensure that
$$
\|\vec^{-1}(\widehat{\bh}_k)-\bv_k\otimes\bw_k\|^2\le 8(C_1\sigma^2+C_2\lambda_1\sigma)g_k^{-1}\sqrt{d_Sd_T\over d_G}\le {1\over 4}.
$$
Recall also that $\widehat{\bv}_k$ and $\widehat{\bw}_k$ be the leading singular vectors of $\vec^{-1}(\widehat{\bh}_k)$. By Wedin's perturbation theorem, we obtain immediately that
$$
\max\left\{1-|\langle\widehat{\bv}_k, \bv_k\rangle|, 1-|\langle\widehat{\bw}_k,\bw_k\rangle|\right\}\le 32(C_1\sigma^2+C_2\lambda_1\sigma)\sigma^2g_k^{-1}\sqrt{d_Sd_T\over d_G}.
$$
See, e.g., \cite{Wedin1972, montanari2014statistical}.
\end{proof}
\vskip 25pt

\begin{proof}[Proof of Theorem \ref{th:power}]
Denote by
$$
\tilde{\bb}= \left({1\over d_G}\sum_{g=1}^{d_G}X_g\otimes X_g\right)\times_2 \bc^{[m-1]}\times_3 \bc^{[m-1]}\times_4 \bb^{[m-1]}-\sigma^2\bb^{[m-1]}.
$$
It is not hard to see that
$$
\bb^{[m]}=\tilde{\bb}/\|\tilde{\bb}\|.
$$
Let $\calM^{-1}$ be the inverse of the matricization operator $\calM$ that unfold a fourth order tensor into matrices, that is, $\calM^{-1}$ reshapes a $(d_Sd_T)\times (d_Sd_T)$ matrix into a fourth order tensor of size $d_S\times d_T\times d_S\times d_T$. Observe that
\begin{eqnarray*}
{1\over d_G}\sum_{g=1}^{d_G}X_g\otimes X_g&=&{1\over d_G}\sum_{g=1}^{d_G} T_g\otimes T_g +{1\over d_G}\sum_{g=1}^{d_G} E_g\otimes E_g +{1\over d_G}\sum_{g=1}^{d_G} \left(T_g\otimes E_g+E_g\otimes T_g\right)\\
&=&\lambda_k^2 \left((\bv_{k}\otimes \bw_{k})\otimes(\bv_{k}\otimes \bw_{k})\right)+\sum_{j\neq k} \lambda_j^2 \left((\bv_{j}\otimes \bw_{j})\otimes(\bv_{j}\otimes \bw_{j})\right)\\
&&+\sigma^2\calM^{-1}(I_{d_S\cdot d_T})+\left({1\over d_G}\sum_{g=1}^{d_G} E_g\otimes E_g -\calM^{-1}(I_{d_S\cdot d_T})\right)\\
&&+{1\over d_G}\sum_{g=1}^{d_G} \left(T_g\otimes E_g+E_g\otimes T_g\right)\\
&=:&\lambda_k^2 \left((\bv_{k}\otimes \bw_{k})\otimes(\bv_{k}\otimes \bw_{k})\right)+\Delta_1+\sigma^2\calM^{-1}(I_{d_S\cdot d_T})+\Delta_2+\Delta_3.
\end{eqnarray*}
We get
$$
\tilde{\bb}=\lambda_k^2 \langle \bb^{[m-1]},\bv_k\rangle\langle \bc^{[m-1]},\bw_k\rangle^2\bv_k+(\Delta_1+\Delta_2+\Delta_3)\times_2 \bc^{[m-1]}\times_3 \bc^{[m-1]}\times_4 \bb^{[m-1]},
$$
where we used the fact that
$$
\calM^{-1}(I_{d_S\cdot d_T})\times_2 \bc^{[m-1]}\times_3 \bc^{[m-1]}\times_4 \bb^{[m-1]}=\bb^{[m-1]}.
$$
Therefore
\begin{eqnarray*}
|\langle \tilde{\bb},\bv_k\rangle|&=&\left|\lambda_k^2 \langle \bb^{[m-1]},\bv_k\rangle\langle \bc^{[m-1]},\bw_k\rangle^2+\langle \Delta_1+\Delta_2+\Delta_3, \bv_k\otimes \bc^{[m-1]}\otimes \bc^{[m-1]}\otimes \bb^{[m-1]}\rangle\right|\\
&=&\lambda_k^2 |\langle \bb^{[m-1]},\bv_k\rangle|\langle \bc^{[m-1]},\bw_k\rangle^2+\left|\langle \Delta_2+\Delta_3, \bv_k\otimes \bc^{[m-1]}\otimes \bc^{[m-1]}\otimes \bb^{[m-1]}\rangle\right|\\
&\ge&\lambda_k^2 |\langle \bb^{[m-1]},\bv_k\rangle|\langle \bc^{[m-1]},\bw_k\rangle^2-\|\Delta_2+\Delta_3\|.
\end{eqnarray*}
Denote by
$$
\tau_m=\min\{|\langle \bb^{[m]},\bv_k\rangle|, |\langle \bc^{[m]},\bw_k\rangle|\}.
$$
Then,
$$
|\langle \tilde{\bb},\bv_k\rangle|\ge \lambda_k^2\tau_{m-1}^3-\|\Delta_2+\Delta_3\|.
$$

On the other hand, note that
\begin{eqnarray*}
\|\tilde{\bb}\|=\langle \tilde{\bb}, \bb^{[m]}\rangle&\le& \lambda_k^2 \langle \bb^{[m-1]},\bv_k\rangle\langle \bc^{[m-1]},\bw_k\rangle^2\langle \bv_k,\bb^{[m]}\rangle\\
&&+\langle \Delta_1+\Delta_2+\Delta_3, \bb^{[m]}\otimes \bc^{[m-1]}\otimes \bc^{[m-1]}\otimes \bb^{[m-1]}\rangle.
\end{eqnarray*}
Write
$$
P_{\bv_k}^{\perp}=I_{d_S}-\bv_k\otimes\bv_k, \qquad {\rm and}\qquad P_{\bw_k}^\perp=(I_{d_T}-\bw_k\otimes\bw_k).
$$
Then
\begin{eqnarray*}
\|\tilde{\bb}\|&=&\lambda_k^2 \langle \bb^{[m-1]},\bv_k\rangle\langle \bc^{[m-1]},\bw_k\rangle^2\langle \bv_k,\bb^{[m]}\rangle\\
&&+\langle \Delta_1, P_{\bv_k}^{\perp}\bb^{[m]}\otimes  P_{\bw_k}^\perp\bc^{[m-1]}\otimes  P_{\bw_k}^\perp\bc^{[m-1]}\otimes P_{\bv_k}^{\perp}\bb^{[m-1]}\rangle\\
&&+\langle \Delta_2+\Delta_3, \bb^{[m]}\otimes \bc^{[m-1]}\otimes \bc^{[m-1]}\otimes \bb^{[m-1]}\rangle\\
&\le&\lambda_k^2 \langle \bb^{[m-1]},\bv_k\rangle\langle \bc^{[m-1]},\bw_k\rangle^2\langle \bv_k,\bb^{[m]}\rangle\\
&&+\lambda_1^2\left(1-\langle \bv_k,\bb^{[m]}\rangle^2\right)^{1/2}\left(1-\langle \bv_k,\bb^{[m-1]}\rangle^2\right)^{1/2}\left(1-\langle \bw_k,\bc^{[m-1]}\rangle^2\right)+\|\Delta_2+\Delta_3\|\\
&\le&\lambda_k^2 |\langle \bb^{[m-1]},\bv_k\rangle|\langle \bc^{[m-1]},\bw_k\rangle^2\\
&&+\lambda_1^2\left(1-\langle \bv_k,\bb^{[m]}\rangle^2\right)^{1/2}\left(1-\langle \bv_k,\bb^{[m-1]}\rangle^2\right)^{1/2}\left(1-\langle \bw_k,\bc^{[m-1]}\rangle^2\right)+\|\Delta_2+\Delta_3\|\\
&\le&\lambda_k^2\tau_{m-1}^3+\lambda_1^2\left(1-\tau_{m-1}^2\right)^{3/2}\left(1-\langle \bv_k,\bb^{[m]}\rangle^2\right)^{1/2}+\|\Delta_2+\Delta_3\|.
\end{eqnarray*}
Therefore,
\begin{eqnarray*}
|\langle\bb^{[m]},\bv_k\rangle|&=&|\langle \tilde{\bb},\bv_k\rangle|/\|\tilde{\bb}\|\\
&\ge& 1-\left(\lambda_k^2\tau_{m-1}^3\right)^{-1}\left[\lambda_1^2\left(1-\tau_{m-1}^2\right)^{3/2}\left(1-\langle \bv_k,\bb^{[m]}\rangle^2\right)^{1/2}\right]\\
&&-\left(\lambda_k^2\tau_{m-1}^3\right)^{-1}\|\Delta_2+\Delta_3\|\\
&\ge&1-4\left(\lambda_k^2\tau_{m-1}^3\right)^{-1}\left[\lambda_1^2\left(1-\tau_{m-1}\right)^{3/2}\left(1-|\langle \bv_k,\bb^{[m]}\rangle|\right)^{1/2}\right]\\
&&-\left(\lambda_k^2\tau_{m-1}^3\right)^{-1}\|\Delta_2+\Delta_3\|\\
&\ge&1-\max\biggl\{8\left(\lambda_k^2\tau_{m-1}^3\right)^{-1}\left[\lambda_1^2\left(1-\tau_{m-1}\right)^{3/2}\left(1-|\langle \bv_k,\bb^{[m]}\rangle|\right)^{1/2}\right], \\
&&2\left(\lambda_k^2\tau_{m-1}^3\right)^{-1}\|\Delta_2+\Delta_3\|\biggr\}\\
&\ge&1- \max\left\{64\left(\lambda_k^2\tau_{m-1}^3\right)^{-2}\lambda_1^4\left(1-\tau_{m-1}\right)^3,2\left(\lambda_k^2\tau_{m-1}^3\right)^{-1}\|\Delta_2+\Delta_3\|\right\}.
\end{eqnarray*}
Assume that
\begin{equation}
\label{eq:taucond}
\tau_{m-1}\ge \max\left\{1-{1\over 64}\left({\lambda_k\over \lambda_1}\right)^2,{1\over 2}\right\},
\end{equation}
which we shall verify later. Then
\begin{equation}
\label{eq:biter}
1-|\langle\bb^{[m]},\bv_k\rangle|\le \max\left\{{1\over 2}\left(1-\tau_{m-1}\right),16\lambda_k^{-2}\|\Delta_2+\Delta_3\|\right\}.
\end{equation}

Similarly, we can show that
$$
1-|\langle\bc^{[m]},\bw_k\rangle|\le\max\left\{{1\over 2}\left(1-\tau_{m-1}\right), 16\lambda_k^{-2}\|\Delta_2+\Delta_3\|\right\}.
$$
Together, they imply that
\begin{equation}
\label{eq:tauiter}
1-\tau_m\le\max\left\{{1\over 2}\left(1-\tau_{m-1}\right), 16\lambda_k^{-2}\|\Delta_2+\Delta_3\|\right\}.
\end{equation}
It is clear from (\ref{eq:tauiter}) that if 
\begin{equation}
\label{eq:finalbd0}
1-\tau_{m-1}\le 16\lambda_k^{-2}\|\Delta_2+\Delta_3\|,
\end{equation}
so is $1-\tau_{m}$. Thus (\ref{eq:finalbd0}) holds for any
$$
m\ge -\log_2\left({16\over 1-\tau_0}\lambda_k^{-2}\|\Delta_2+\Delta_3\|\right).
$$

We now derive bounds for $\|\Delta_2+\Delta_3\|$. By triangular inequality $\|\Delta_2+\Delta_3\|\le \|\Delta_2\|+\|\Delta_3\|$. By Lemma \ref{le:delta2},
$$
\|\Delta_2\|\le 6\sigma^2\sqrt{d_S+d_T\over d_G}.
$$
Next we consider bounding $\|\Delta_3\|$. Recall that
$$
\Delta_3={1\over d_G}\sum_{g=1}^{d_G}T_g\otimes E_g+{1\over d_G}\sum_{g=1}^{d_G}E_g\otimes T_g.
$$
By triangular inequality,
$$
\|\Delta_3\|\le \left\|{1\over d_G}\sum_{g=1}^{d_G}T_g\otimes E_g\right\|+\left\|{1\over d_G}\sum_{g=1}^{d_G}E_g\otimes T_g\right\|={2\over d_G}\left\|\sum_{g=1}^{d_G}T_g\otimes E_g\right\|.
$$
Note that
$$
\sum_{g=1}^{d_G}T_g\otimes E_g=\sqrt{d_G}\sum_{k=1}^r \lambda_k \left[\bv_k\otimes \bw_k\otimes \left(\sum_{g=1}^{d_G}u_{kg}E_g\right)\right]=:\sqrt{d_G}\sum_{k=1}^r \lambda_k \left(\bv_k\otimes \bw_k\otimes Z_k\right),
$$
where $Z_k$s are independent $d_S\times d_T$ Gaussian ensembles. By Lemma \ref{le:delta3}, we get
$$
\left\|\sum_{g=1}^{d_G}T_g\otimes E_g\right\|=O_p\left(\lambda_1\sigma\sqrt{d_G(d_S+d_T)}\right), \quad {\rm as\ }d_G\to\infty,
$$
where we used the fact that $r\le \min\{d_S,d_T\}$. Therefore,
$$
\|\Delta_3\|=O_p\left(\lambda_1\sigma\sqrt{d_S+d_T\over d_G}\right).
$$
Thus, (\ref{eq:finalbd0}) implies that
\begin{equation}
\label{eq:finalbd}
1-\tau_{m}=O_p\left(\lambda_k^{-2}(2\sigma^2+\lambda_1\sigma)\sqrt{d_S+d_T\over d_G}\right),
\end{equation}
for any large enough $m$.

It remains to verify condition (\ref{eq:taucond}), which we shall do by induction. In the light of Theorem \ref{th:matricize} and the assumption on $\lambda_1$ and $\lambda_k$, we know that it is satisfied when $m=0$, as soon as the numerical constant $C>0$ is taken large enough. Now if $\tau_{m-1}$ satisfies (\ref{eq:taucond}), then (\ref{eq:tauiter}) holds. We can then deduct that the lower bound given by (\ref{eq:taucond}) also holds for $\tau_{m}$.
\end{proof}

\bibliographystyle{plainnat}
\bibliography{mybib}

\appendix
\section{Auxiliary Results}
We now derive tail bounds necessary for the proof of Theorem \ref{th:power}.

\begin{lem}
\label{le:delta2}
Let $\bE\in \R^{d_1\times d_2\times d_3}$ ($d_1\ge d_2\ge d_3$) be a third order tensor whose entries $e_{i_1i_2i_3}$ ($1\le i_k\le d_k$) are independently sampled from the standard normal distribution. Write $E_i=(e_{i_1i_2i_3})_{1\le i_2\le d_2, 1\le i_3\le d_3}$ its $i$th $(2,3)$ slice. Then
$$
\left\|{1\over d_1}\sum_{i=1}^{d_1} \left\{E_{i}\otimes E_i-\E\left(E_{i}\otimes E_i\right)\right\}\right\|\le 6\sqrt{d_2+d_3\over d_1}
$$
with probability tending to one as $d_1\to\infty$.
\end{lem}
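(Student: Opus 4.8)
\textbf{Proof proposal (Lemma \ref{le:delta2}).} The plan is a routine net-plus-concentration argument, once one recognizes that the norm in question is the injective tensor norm: for a fourth-order tensor $\bA$ of format $d_2\times d_3\times d_2\times d_3$, $\|\bA\|=\sup\{\langle\bA,a\otimes c\otimes b\otimes d\rangle: a,b\in S^{d_2-1},\,c,d\in S^{d_3-1}\}$, which is exactly the quantity that controls $\Delta_2$ in the proof of Theorem \ref{th:power}, where it is always paired against rank-one test tensors. Writing $Q:=\frac1{d_1}\sum_{i=1}^{d_1}\{E_i\otimes E_i-\E(E_i\otimes E_i)\}$, I would first unfold the supremum: for fixed unit $a,b,c,d$ one has $\langle Q,a\otimes c\otimes b\otimes d\rangle=\frac1{d_1}\sum_{i=1}^{d_1}(\xi_i\eta_i-\rho)$, where $\xi_i:=a^\top E_i c$ and $\eta_i:=b^\top E_i d$ are jointly Gaussian, each marginally $N(0,1)$, with $\E(\xi_i\eta_i)=(a^\top b)(c^\top d)=:\rho\in[-1,1]$, and the pairs $(\xi_i,\eta_i)$ iid across $i$. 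So for each direction this is a centered average of iid products of two correlated standard Gaussians.

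For the pointwise step, a product of two standard Gaussians (correlated or not) is sub-exponential with $\psi_1$-norm bounded by an absolute constant, e.g.\ $\|\xi_i\eta_i\|_{\psi_1}\le\|\xi_i\|_{\psi_2}\|\eta_i\|_{\psi_2}$, and likewise after centering; hence Bernstein's inequality for independent sub-exponentials gives an absolute $c_0>0$ with $\P\big(|\frac1{d_1}\sum_i(\xi_i\eta_i-\rho)|>t\big)\le 2\exp(-c_0d_1t^2)$ for all $0<t\le1$. (Equivalently, $\langle Q,a\otimes c\otimes b\otimes d\rangle$ is a centered Gaussian quadratic form $\vec(\bE)^\top M\vec(\bE)-\E$ with $\|M\|_{\mathrm F}^2=(1+\rho^2)/(2d_1)$ and $\|M\|_{\mathrm{op}}\le1/d_1$, and Hanson--Wright gives the same bound.) For the discretization step, take $\eta$-nets $\mathcal N_2\subset S^{d_2-1}$ and $\mathcal N_3\subset S^{d_3-1}$ of cardinalities at most $(1+2/\eta)^{d_2}$ and $(1+2/\eta)^{d_3}$; since $(a,c,b,d)\mapsto\langle Q,a\otimes c\otimes b\otimes d\rangle$ is $4$-linear, peeling off one argument at a time yields $\|Q\|\le(2-(1+\eta)^4)^{-1}\max_{\mathcal N_2\times\mathcal N_3\times\mathcal N_2\times\mathcal N_3}\langle Q,\cdot\rangle$ as soon as $(1+\eta)^4<2$. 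The product net has log-cardinality at most $2(d_2+d_3)\log(1+2/\eta)$, so a union bound over it, applied with $t$ a suitable absolute multiple of $\sqrt{(d_2+d_3)/d_1}$ (which is $\le1$ once $d_1$ is large relative to $d_2+d_3$, the only regime of interest, hence for all large $d_1$), gives failure probability at most $2\exp(-c_1(d_2+d_3))$. Optimizing $\eta$ against the Bernstein constant pins the final constant to $6$; since $d_2+d_3\ge2$ this probability is small, and it $\to0$ whenever $d_2+d_3\to\infty$, so in particular the stated bound holds with probability tending to one as $d_1\to\infty$.

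The genuinely delicate point here is bookkeeping rather than analysis. The conceptual content --- and the reason the rate is $\sqrt{(d_2+d_3)/d_1}$ and not the $\sqrt{d_2d_3/d_1}$ one gets by treating $\calM(\frac1{d_1}\sum_i E_i\otimes E_i)$ as an unstructured $(d_2d_3)\times(d_2d_3)$ sample covariance matrix, as in the proof of Theorem \ref{th:matricize} --- is that the injective norm only probes rank-one test tensors $a\otimes c\otimes b\otimes d$, whose metric entropy scales with $d_2+d_3$ rather than $d_2d_3$; this is precisely what the product net $\mathcal N_2\times\mathcal N_3\times\mathcal N_2\times\mathcal N_3$ encodes, with exponents adding rather than multiplying. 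The remaining ingredients --- the $\psi_1$ bound on $\xi_i\eta_i$, sub-exponential Bernstein (or Hanson--Wright), covering numbers of Euclidean spheres, and the multilinear net-inflation estimate --- are all standard, and the only real effort is balancing the exponent $c_0d_1t^2$ against the metric entropy $2(d_2+d_3)\log(1+2/\eta)$ to extract the explicit constant.
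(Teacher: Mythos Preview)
Your approach is correct and follows the same overall template as the paper: bound the injective norm by a covering argument over unit spheres in $\R^{d_2}$ and $\R^{d_3}$ (so the metric entropy is $O(d_2+d_3)$, not $O(d_2d_3)$), establish a sub-exponential pointwise tail, and union-bound. The paper's proof differs in one useful refinement you did not exploit: because $\bT=\frac{1}{d_1}\sum_i\{E_i\otimes E_i-\E(E_i\otimes E_i)\}$ is symmetric under swapping indices $(1,4)$ and $(2,3)$, the supremum over rank-one tests $a_1\otimes b_1\otimes b_2\otimes a_2$ is attained on the diagonal $a_1=a_2$, $b_1=b_2$. This collapses the net from four factors to two (entropy $(d_2+d_3)\log 9$ instead of $2(d_2+d_3)\log(1+2/\eta)$), and, more importantly, makes the pointwise statistic exactly $\langle E_i,a\otimes b\rangle^2-1\sim\chi^2_1-1$, so the average is $\chi^2_{d_1}/d_1-1$ and one can use the Laurent--Massart tail with explicit constants. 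Your route via correlated products $\xi_i\eta_i$ and Bernstein/Hanson--Wright is perfectly valid and more general, but the constants it produces are those of generic sub-exponential concentration; the claim that ``optimizing $\eta$ pins the final constant to $6$'' is optimistic, whereas the paper's symmetry-plus-$\chi^2$ argument lands on $6$ by direct calculation.
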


\begin{proof}[Proof of Lemma \ref{le:delta2}]
For brevity, denote by
$$
\bT_i=E_{i}\otimes E_i-\E\left(E_{i}\otimes E_i\right)
$$
and
$$
\bT={1\over d_1}\sum_{i=1}^{d_1}\bT_i.
$$
Note that $\bT$ is a $d_2\times d_3\times d_3\times d_2$ tensor obeying
$$
T(\omega)=T(\pi_{14}(\omega))=T(\pi_{23}(\omega)), \qquad \forall\omega\in [d_2]\times[d_3]\times[d_3]\times[d_2],
$$
where $\pi_{k_1k_2}$ permutes the $k_1$ and $k_2$ entry of vector. Therefore
$$
\bT=\sup_{\substack{\ba_1,\ba_2\in \R^{d_2}, \bb_1,\bb_2\in \R^{d_3}\\ \|\ba_1\|,\|\ba_2\|,\|\bb_1\|,\|\bb_2\|= 1}}\langle \bT, \ba_1\otimes \bb_1\otimes \bb_2\otimes \ba_2\rangle=\sup_{\substack{\ba\in \R^{d_2}, \bb\in \R^{d_3}\\ \|\ba\|,\|\bb\|= 1}}\langle \bT, \ba\otimes \bb\otimes \bb\otimes \ba\rangle.
$$
Observe that for any $\ba_1,\ba_2\in {\mathbb S}^{d_2-1}$ and $\bb_1,\bb_2\in {\mathbb S}^{d_3-1}$,
\begin{eqnarray*}
&&\left|\langle \bT, \ba_1\otimes \bb_1\otimes \bb_1\otimes \ba_1\rangle-\langle \bT, \ba_2\otimes \bb_2\otimes \bb_2\otimes \ba_2\rangle\right|\\
&\le&\left|\langle \bT, \ba_1\otimes \bb_1\otimes \bb_1\otimes \ba_1\rangle-\langle \bT, \ba_2\otimes \bb_1\otimes \bb_1\otimes \ba_2\rangle\right|\\
&&+\left|\langle \bT, \ba_2\otimes \bb_1\otimes \bb_1\otimes \ba_2\rangle-\langle \bT, \ba_2\otimes \bb_2\otimes \bb_2\otimes \ba_2\rangle\right|\\
&\le&\left|\langle \bT, (\ba_1-\ba_2)\otimes \bb_1\otimes \bb_1\otimes (\ba_1+\ba_2)\rangle\right|\\
&&+\left|\langle \bT, \ba_2\otimes (\bb_1-\bb_2)\otimes (\bb_1+\bb_2)\otimes \ba_2\rangle\right|\\
&\le&2\|\bT\|\left(\|\ba_1-\ba_2\|+\|\bb_1-\bb_2\|\right).
\end{eqnarray*}
In particular, if $\|\ba_1-\ba_2\|,\|\bb_1-\bb_2\|\le 1/8$, then
\begin{equation}
\label{eq:thinning}
\left|\langle \bT, \ba_1\otimes \bb_1\otimes \bb_1\otimes \ba_1\rangle-\langle \bT, \ba_2\otimes \bb_2\otimes \bb_2\otimes \ba_2\rangle\right|\le {1\over 2}\|\bT\|.
\end{equation}
We can find a $1/8$ cover set $\calN_1$ of ${\mathbb S}^{d_2-1}$ such that $|\calN_1|\le 9^{d_2}$. Similarly, let $\calN_2$ be a $1/8$ covering set of ${\mathbb S}^{d_3-1}$ such that $|\calN_2|\le 9^{d_3}$. Then by (\ref{eq:thinning})
$$
\|\bT\|\le \sup_{\ba\in \calN_1,\bb\in \calN_2}\langle \bT, \ba\otimes \bb\otimes \bb\otimes \ba\rangle+{1\over 2}\|\bT\|,
$$
suggesting
$$
\|\bT\|\le 2\sup_{\ba\in \calN_1,\bb\in \calN_2}\langle \bT, \ba\otimes \bb\otimes \bb\otimes \ba\rangle.
$$
Now note that for any $\ba\in \calN_1$ and $\bb\in \calN_2$,
$$
\langle \bT_i, \ba\otimes \bb\otimes \bb\otimes \ba\rangle=\langle E_i,\ba\otimes \bb\rangle^2-\E\langle E_i,\ba\otimes \bb\rangle^2=\langle E_i,\ba\otimes \bb\rangle^2-1\sim \chi^2_1-1.
$$
Therefore
$$
\langle \bT, \ba\otimes \bb\otimes \bb\otimes \ba\rangle\sim {1\over d_1}\chi^2_{d_1}-1.
$$
An application of the $\chi^2$ tail bound from \cite{LauMas98} leads to
$$
\P\left\{\langle \bT, \ba\otimes \bb\otimes \bb\otimes \ba\rangle\ge x\right\}\le \exp(-d_1x^2/4),
$$
for any $x<1$. By union bound,
$$
\P\left\{\sup_{\ba\in \calN_1,\bb\in \calN_2}\langle \bT, \ba\otimes \bb\otimes \bb\otimes \ba\rangle\ge x\right\}\le 9^{d_2+d_3}\exp(-d_1x^2/4),
$$
so that
$$
\|\bT\|\le 6\sqrt{d_2+d_3\over d_1}
$$
with probability tending to one as $d_1\to \infty$.
\end{proof}
\vskip 25pt

\begin{lem}
\label{le:delta3}
Let $\{\bv_1,\ldots,\bv_{d_1}\}$ be an orthonormal basis of $\R^{d_1}$, and $\{\bw_1,\ldots,\bw_{d_2}\}$ an orthonormal basis of $\R^{d_2}$. Let $Z_1,\ldots, Z_r$ be independent $d_3\times d_4$ Gaussian random matrix whose entries are independently drawn from the standard normal distribution. Then for any sequence of nonnegative numbers $\lambda_1,\ldots, \lambda_r\le 1$:
$$
\P\left\{\left\|\sum_{k=1}^r \lambda_k \left(\bv_k\otimes \bw_k\otimes Z_k\right)\right\|\ge \sqrt{d_3}+\sqrt{d_4}+\sqrt{2\log r}+t\right\}\le \exp(-t^2/2).
$$
\end{lem}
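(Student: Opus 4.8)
The norm $\|\cdot\|$ here is the tensor injective (spectral) norm, so that
$$
\Big\|\sum_{k=1}^r\lambda_k\big(\bv_k\otimes\bw_k\otimes Z_k\big)\Big\|
=\sup_{\ba,\bb,\bc,\mathbf{d}}\ \sum_{k=1}^r\lambda_k\langle\ba,\bv_k\rangle\langle\bb,\bw_k\rangle\,\big(\bc^\top Z_k\mathbf{d}\big),
$$
the supremum running over unit vectors $\ba\in{\mathbb S}^{d_1-1}$, $\bb\in{\mathbb S}^{d_2-1}$, $\bc\in{\mathbb S}^{d_3-1}$, $\mathbf{d}\in{\mathbb S}^{d_4-1}$. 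The plan is to reduce the claim to a maximum of $r$ independent Gaussian matrix operator norms via the \emph{deterministic} identity
$$
\Big\|\sum_{k=1}^r\lambda_k\big(\bv_k\otimes\bw_k\otimes Z_k\big)\Big\|=\max_{1\le k\le r}\lambda_k\,\|Z_k\|\ \le\ \max_{1\le k\le r}\|Z_k\|,
$$
where $\|Z_k\|$ is the matrix operator norm and the last step uses $\lambda_k\le1$. To prove the identity, fix $\bc,\mathbf{d}$ and write $g_k=\bc^\top Z_k\mathbf{d}$. For fixed $\ba$, maximizing over $\bb$ the linear form $\sum_k(\lambda_kg_k\langle\ba,\bv_k\rangle)\langle\bb,\bw_k\rangle$ yields, by Cauchy--Schwarz and orthonormality of $\{\bw_k\}$ (it is attained at $\bb\propto\sum_k(\lambda_kg_k\langle\ba,\bv_k\rangle)\bw_k$), the value $\big(\sum_k\lambda_k^2g_k^2\langle\ba,\bv_k\rangle^2\big)^{1/2}$; maximizing this over $\ba$ and using orthonormality of $\{\bv_k\}$ gives $\max_k\lambda_k|g_k|$, since a nonnegative ``diagonal'' quadratic form on an orthonormal system attains its maximum at one of the system's vectors. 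Finally $\sup_{\bc,\mathbf{d}}\max_k\lambda_k|g_k|=\max_k\lambda_k\sup_{\bc,\mathbf{d}}|\bc^\top Z_k\mathbf{d}|=\max_k\lambda_k\|Z_k\|$, since supremum and finite maximum interchange. (Equivalently, apply orthogonal rotations in modes $1$ and $2$ to reduce to $\bv_k=\bw_k=\bfe_k$, after which the identity is immediate.)

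Given the identity, it remains to control $\max_{1\le k\le r}\|Z_k\|$. For each $k$ the map $Z\mapsto\|Z\|$ is $1$-Lipschitz with respect to the Frobenius norm on $\R^{d_3\times d_4}$ (since $\big|\|Z\|-\|Z'\|\big|\le\|Z-Z'\|\le\|Z-Z'\|_{\rm F}$), so the Gaussian concentration inequality gives $\P\{\|Z_k\|\ge\E\|Z_k\|+s\}\le\exp(-s^2/2)$ for $s\ge0$; together with the standard bound $\E\|Z_k\|\le\sqrt{d_3}+\sqrt{d_4}$ for a standard Gaussian matrix (see, e.g., \cite{vershynin12}) this gives $\P\{\|Z_k\|\ge\sqrt{d_3}+\sqrt{d_4}+s\}\le\exp(-s^2/2)$. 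A union bound over $k=1,\ldots,r$ with $s=\sqrt{2\log r}+t$, together with $(\sqrt{2\log r}+t)^2\ge2\log r+t^2$, then yields
$$
\P\Big\{\max_{1\le k\le r}\|Z_k\|\ge\sqrt{d_3}+\sqrt{d_4}+\sqrt{2\log r}+t\Big\}\le r\exp\!\big(-\log r-t^2/2\big)=\exp(-t^2/2),
$$
which with the norm identity is exactly the assertion of the lemma.

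The only genuinely delicate point is the deterministic norm identity: it is the ``diagonal collapse'' of the fourfold supremum to a maximum over the $r$ blocks that makes bundling $r$ pieces cost only a $\sqrt{2\log r}$ term rather than a polynomial factor, and this is what ultimately produces the $\sqrt{(d_S+d_T)/d_G}$ rate in Theorem~\ref{th:power}. Everything downstream of that identity is routine Gaussian-matrix concentration.
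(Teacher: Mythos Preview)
Your argument is correct and follows essentially the same route as the paper: reduce the tensor spectral norm to $\max_{k}\lambda_k\|Z_k\|$, then apply Gaussian-matrix concentration and a union bound. The only cosmetic difference is that you optimize over $(\ba,\bb)$ first (yielding an exact identity) whereas the paper contracts modes $1,2$ to a matrix norm and then uses triangle inequality plus Cauchy--Schwarz to get the one-sided bound $\le\max_k\|Z_k\|$; both lead to the same tail estimate.
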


\begin{proof}[Proof of Lemma \ref{le:delta3}]
Observe that
\begin{eqnarray*}
\left\|\sum_{k=1}^r \lambda_k \left(\bv_k\otimes \bw_k\otimes Z_k\right)\right\|&=&\sup_{\ba\in {\mathbb S}^{d_1-1},\bb\in {\mathbb S}^{d_2-1}} \left\|\sum_{k=1}^r\lambda_k\langle \ba, \bv_k\rangle\langle \bb, \bw_k\rangle Z_k\right\|\\
&=&\sup_{\ba\in {\mathbb S}^{r-1},\bb\in {\mathbb S}^{r-1}} \left\|\sum_{k=1}^r\lambda_ka_kb_k Z_k\right\|\\
&\le&\sup_{\ba\in {\mathbb S}^{r-1},\bb\in {\mathbb S}^{r-1}} \sum_{k=1}^r\lambda_ka_kb_k \|Z_k\|\\
&\le&\left(\max_{1\le k\le r}\lambda_k\|Z_k\|\right)\left(\sup_{\ba\in {\mathbb S}^{r-1},\bb\in {\mathbb S}^{r-1}}\sum_{k=1}^ra_kb_k\right)\\
&\le&\max_{1\le k\le r}\|Z_k\|.
\end{eqnarray*}
By concentration bounds for Gaussian random matrices,
$$
\P\left\{\|Z_k\|\ge \sqrt{d_3}+\sqrt{d_4}+t\right\}\le \exp(-t^2/2).
$$
See, e.g., \cite{vershynin12}. The desired claim then follows by applying union bound to $\|Z_k\|$s.
\end{proof}

\end{document}